\documentclass{ieeeaccess}
\usepackage{cite}
\usepackage{amsmath,amssymb,amsfonts}
\usepackage{amsthm}
\usepackage{algorithm2e}
\usepackage{algorithmic}
\usepackage{graphicx}
\graphicspath{{img/}}
\usepackage{textcomp}

\usepackage{framed}
\definecolor{shadecolor}{rgb}{0.9,0.9,0.1}
\newcommand{\x}{}
\newcommand{\y}{}
\newcommand{\xx}{}
\newcommand{\yy}{}

\def\BibTeX{{\rm B\kern-.05em{\sc i\kern-.025em b}\kern-.08em
    T\kern-.1667em\lower.7ex\hbox{E}\kern-.125emX}}

\begin{document}
\history{Date of publication xxxx 00, 0000, date of current version xxxx 00, 0000.}
\doi{we.do.not.have.it.yet}

\newtheorem{theorem}{Theorem}[section]
\newtheorem{corollary}{Corollary}[]

\title{AoI-based Multicast Routing over Voronoi Overlays with Minimal Overhead}

\author{
\uppercase{Michele Albano}\authorrefmark{1}, \IEEEmembership{Senior Member, IEEE},
\uppercase{Matteo Mordacchini}\authorrefmark{2}, 
and
\uppercase{Laura Ricci}\authorrefmark{3}}
\address[1]{Department of Computer Science, Aalborg University, Denmark (e-mail: mialb@cs.aau.dk)}
\address[2]{Institute for Informatics and Telematics (IIT-CNR), Via Moruzzi 1, 56124, Pisa, Italy (e-mail: matteo.mordacchini@iit.cnr.it)}
\address[3]{Universit\'{a} di Pisa, Italy (e-mail: laura@di.unipi.it)}
\tfootnote{
The research has received funding from the EU ECSEL JU under the H2020 Framework Programme, JU grant nr. 737459 (Productive4.0 project).}

\markboth
{M. Albano, M. Mordacchini, L. Ricci: AoI-based Multicast Routing over Voronoi Overlays with minimal overhead}
{M. Albano, M. Mordacchini, L. Ricci: AoI-based Multicast Routing over Voronoi Overlays with minimal overhead}

\corresp{Corresponding author: Michele Albano  (e-mail: mialb@cs.aau.dk).}

\begin{abstract}
\x The increasing pervasive and ubiquitous presence of devices at the edge of the Internet is creating new scenarios for the emergence of novel services and applications. This is particularly true for location- and context-aware services.  These services call for new decentralized, self-organizing communication schemes that are able to face issues related to demanding resource consumption constraints, while ensuring efficient locality-based information dissemination and querying. Voronoi-based communication techniques are among the most widely used solutions in this field. However, when used for forwarding messages inside closed areas of the network (called Areas of Interest, AoIs), these solutions generally require a significant overhead in terms of redundant and/or unnecessary communications. This fact negatively impacts both the devices' resource consumption levels, as well as the network bandwidth usage. In order to eliminate all unnecessary communications,  in this paper  we present the MABRAVO (Multicast Algorithm for Broadcast and Routing over AoIs in Voronoi Overlays) protocol suite. MABRAVO allows to forward information within an AoI in a Voronoi network using only local information,  reaching all the devices in the area, and using the lowest possible number of messages, i.e., just one message for each node included in the AoI. The paper presents the mathematical and algorithmic descriptions of MABRAVO, as well as experimental findings of its performance, showing its ability to reduce communication costs to the strictly minimum required. \y
\end{abstract}

\begin{keywords}
Area of Interest, multicast, Voronoi networks
\end{keywords}

\titlepgskip=-15pt

\maketitle

\thanks{We thank Dr. Ranieri Baraglia for the invaluable support he gave  during the preliminary work that led to the results presented in this paper.}

\section{Introduction}\label{sec:intro}

We are witnessing a fast and vast expansion of the Internet at its edges~\cite{iopComcom}. This is mainly due to the pervasive diffusion in the environment of smart objects, like sensors, Internet of Things (IoT) devices, user personal devices, etc.

This scenario allows the emergence of novel services and applications~\cite{sayed18,sot16,wang17,iop2020human,delsing2017arrowhead,bellavista18,mordacchini2015crowdsourcing}, supported by potentially large networks of highly distributed and autonomous devices. \x Traditional centralized control and communication techniques \y do not suit the needs and requirements of such an environment.

In particular, these devices are usually equipped with computing and communication capabilities, that allow them to create and exchange information both among themselves and with other remote services. 
\x One of the most challenging problems is related to the fact that this kind of systems typically requires frequent exchanges of information among a large number of geographically dispersed devices. 
The communication complexity  is further increased by the fact that devices cannot always count on the support of central communication infrastructures, posing the need to apply autonomous, self-organizing forms of communication and interaction among devices~\cite{8890658,STRAYER20181,Alsharoa18,8379320}.  

Location- and context-aware services~\cite{Haosheng18} in this scenario are faced with additional issues. In fact, these services are characterized by the fact that most of the messages are directed (and of interest) only to limited/specific areas of the network. This is the case of communications directed to bounded regions of the space, like Areas of Interest~\cite{Liu2014}, validity regions~\cite{vrsense,mqry}, and safe regions~\cite{Qi18}.

As a consequence, effective and efficient communication schemes for this kind of application are of utmost relevance. This fact poses the challenge to devise information dissemination mechanisms that are able to face locality-based communication needs, while coping efficiently with demanding requests in terms of scalability, responsiveness, performance and resource consumption constraints.

\subsection{Contribution}\label{subsec:contribution}

In this paper, we focus on  Voronoi-based overlay networks for data communication and data dissemination among decentralized, autonomous entities. This is a widely used solution~\cite{ghaffari2010necessity}. \y In fact, Voronoi-based techniques have been presented as effective solutions for disseminating and querying data in decentralized, distributed systems. This kind of techniques have been successfully applied in the IoT~\cite{wan19}, wireless sensor networks~\cite{pietra19,CRPD18}, underwater networks~\cite{underwater17, underwater18}, embedded computing systems~\cite{taxi16}, vehicular networks~\cite{v2v18}, and even distributed virtual environments\cite{ghaffari2014dynamic,mordacchini2010hivory}. 

\x
More specifically, this paper presents a solution for information dissemination within bounded areas of the network. This communication paradigm can be relevant for a wide spectrum of applications for context-aware services at the Edge~\cite{abdellatif2019edge,an2018context,liao2019learning}. Following the literature on this subject, in this paper these bounded regions are called Areas of Interest (AoIs)~\cite{Liu2014}. Issues with decentralized communications toward nodes in a AoI are related to the fact that the entities in the system have to coordinate autonomously in order to determine the involvement of other nodes in the propagation  and delivery of the information, without relying on any form of centralized/global support. Geometric routing techniques are generally used in this kind of systems to deliver messages and queries towards interested areas. However, previous works highlight the risk for the system to incur in redundant messages (i.e., the same message is delivered to some nodes more than once) and/or unnecessary communications (i.e., messages are sent to nodes not related to the AoI)~\cite{VON, Voraque}. In order to overcome these issues, state-of-the-art solutions require that nodes in a Voronoi network should use additional data, such as the positions of the neighbors of a node's immediate neighboring nodes (e.g.~\cite{VON, Voraque, Buyukkaya2008}). All these facts compromise the efficiency of the system. In fact, all these communications (both redundant and unnecessary messages, and the ones needed for maintaining an updated neighbors-of-neighbors list) are expensive in terms of nodes' resource consumption and bandwidth usage. 

The contribution of this paper is to present a solution that is able to avoid all these costs by defining a novel decentralized communication scheme for AoIs that is able:
\begin{itemize}
\item to rely \emph{only} on strictly local information (i.e., the position of immediate neighbors \xx and their identifiers, which will be called IDs for short in the rest of the paper \yy );
\item to always deliver a message to \emph{all} the nodes in an AoI;
\item to totally avoid \emph{all} redundant communications;
\item to totally avoid \emph{all} unnecessary communications.
\end{itemize}

With our approach, the number of messages required to deliver data within an AoI is reduced to its minimum, thus saving nodes' battery and computational resources, as well as bandwidth usage. The proposed solution is based on geometric properties of Voronoi networks. At the best of our knowledge, this is the first technique that is able to achieve all these objectives.

In order to present our solution, in this paper we provide:
\begin{itemize}
\item a mathematical description of the proposed approach;
\item mathematical proofs of the correctness of the proposed solution;
\item an algorithmic description of the approach.
\end{itemize}

This paper does not deal with the decentralized maintenance of a Voronoi network, since several solutions are already available in the literature~\cite{Alsalih08,pietrabissa2016distributed,beaumont2007voronet} and can be used for this purpose. We do not deal either with the dynamic behavior of nodes (i.e., churning nodes). This paper presents the very first completely decentralized solution that allows to forward information within a delimited AoI using only local information, and achieving the lowest possible number of messages (i.e., just one message for each node included in the AoI). The purpose of this paper is to present such a solution, prove it is mathematically sound, and explain how to implement it. Discussing possible dynamic behaviors of the nodes would have added too much material to this paper, making the presentation of this work less coherent and less focused with respect to the main goal of the paper. We thus decided to leave the issues related to dynamic nodes to further research and future investigations.
\y

The rest of this paper is organized as follows:
 Section~\ref{sec:model} presents the model of the overlay network we consider.
Section~\ref{sec:theoalgo} defines the
 algorithms of the MABRAVO protocol suite, and proves that they are correct and computationally efficient.
Section~\ref{sec:sims} presents our simulation environment and the results it provided.
\x Section~\ref{sec:related} presents an overview of the literature about the topic of this paper. \y
Finally, Section~\ref{sec:conc} presents our conclusions about the topic at hand.

\section{Network Model}\label{sec:model}

Given a set of sites $S = {s_1 ... s_n}$ that are points in a plane, a {\em 2-dimensional Voronoi tessellation} is a partition of the plane into cells, which assigns to each site $s_i$ a cell $V_{s_i}$ that is the set of points closer to $s_i$ than to any other site $s_j \in S$, according to a given definition of distance.
In this paper we consider the classical Voronoi tessellation, which uses the $L^2$ metric as a distance:
\begin{displaymath}
||p_i,p_j|| = \sqrt{(x_i-x_j)^2+(y_i-y_j)^2}
\end{displaymath}
\noindent
where $(x_i,y_i)$ are the coordinates of the point $p_i$, and $(x_j,y_j)$ are the coordinates of the point $p_j$.
The cell $V_{s_i}$ associated to  the site $s_i = (x_i, y_i)$ is the locus of all the points in the plane that are closer to $s_i$ than to any other site, formally
\begin{eqnarray}\label{eq:cell}
p_k \in V_{s_i} \Leftrightarrow
\forall s_j:~ ||p_k, s_i|| \le ||p_k, s_j||
\end{eqnarray}

\Figure[t!](topskip=0pt, botskip=0pt, midskip=0pt)[width=3.0in]{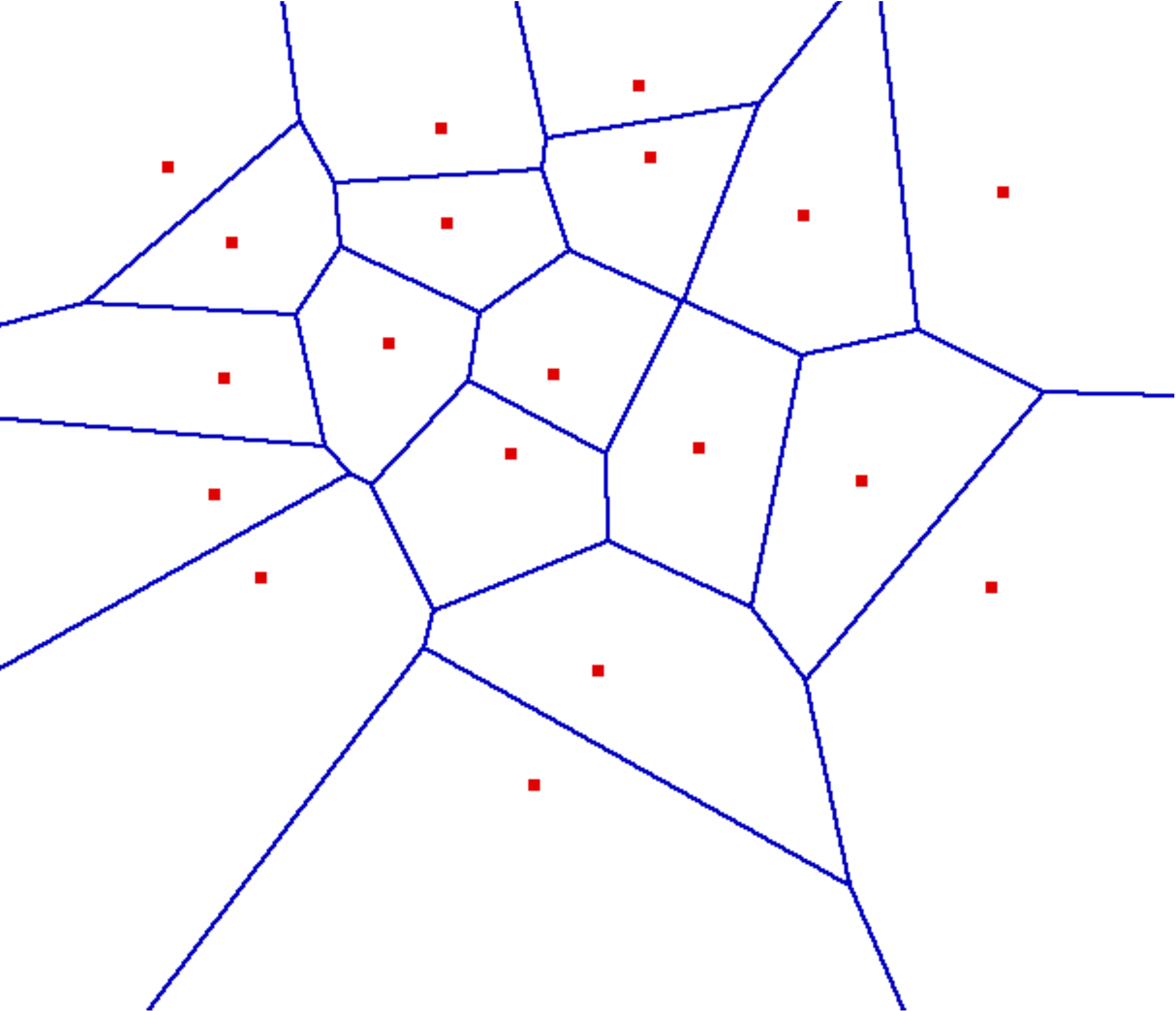}
{Classical Voronoi tessellation.\label{fig:voronoi_1}}

Figure~\ref{fig:voronoi_1} shows an example of a {\em Voronoi tessellation}.
Each inequality in Equation~\ref{eq:cell} is equivalent to dividing the plane into half-spaces, thus the cell is obtained by intersecting half-spaces, resulting in the cell being a convex polygon.
A cell $V_{s_i}$ may be characterized either by a finite area, or by an infinite area if some of the cell's sides are segments degenerated into half-lines.

A {\bf side} $l_{i,j}$ of a Voronoi cell is a segment that lays between two adjacent Voronoi cells $V_{s_i}$ and $V_{s_j}$, and a {\bf vertex} $v_{i,j,k}$ of a Voronoi cell is a point that is intersection between two sides of the Voronoi cell, and that lays between Voronoi cells $V_{s_i}$, $V_{s_j}$ and $V_{s_k}$. Should a segment degenerate into a half-line, the formalism still holds, except that the "segment" is adjacent to one vertex only.
A {\em Voronoi Overlay} is an overlay network that assigns the links among the sites following the Voronoi tessellations, i.e. a link exists in the overlay {\bf if and only if} the sites are Voronoi neighbors \cite{Beaumont07, VON, Voraque}.

In this paper, an \textbf{Area of Interest} (AoI) is a finite 2-dimensional convex region in the plane.
We consider a site $s_i$ to be in the AoI \textbf{if and only if} the intersection $I_{s_i}$ between its Voronoi cell $V_{s_i}$ and the AoI is not empty, formally $I_{s_i} = V_{s_i} \bigcap \textrm{AoI} \neq \emptyset$.

\begin{table}[t]
\caption{Definitions}\label{tab:def}
\begin{tabular}{|p{80pt}|p{145pt}|}
\hline
$D$ & either the destination of a routing message, or the originator of a AoI-cast request \\
\hline
AoI & convex area on the plane \\
\hline
$p_i = (x_i,y_i)$ & point $i$ of coordinates $x_i$ and $y_i$ \\
\hline
$s_i$ & site $i$ \\
\hline
$V_{s_i}$ & Voronoi cell of a site $s_i$ \\
\hline
$l_{i,j}$ & side between cells $V_{s_i}$ and $V_{s_j}$ \\
\hline
$v_{i,j,k}$ & vertex between cells $V_{s_i}$, $V_{s_j}$ and $V_{s_k}$ \\
\hline
$I_{s_i}$ & the intersection between $V_{s_i}$ and the AoI \\
\hline
$s_i$ is considered ``into'' the AoI & if $I_{s_i}$ is not empty \\
\hline
$Z_{s_i}(D)$ & union of the points of the segments that connect points of $I_{s_i}$ to $D$, with extremes not included, plus $D$ \\
\hline
$S_{s_i}(D)$ ({\em Segments of Interest} of $s_i$ towards $D$)
& intersection between $s_i$'s cell sides, and $Z_{s_i}(D)$ \\
\hline
$N_{s_i}(D)$ & neighbors of $s_i$ that share with $V_{s_i}$ a side with at least one point $\in S_{s_i}(D)$ \\
\hline
$V_{s_i,s_j}$ and $I_{s_i,s_j}$& $V_{s_i}$ and $I_{s_i}$ computed in the local vision of $s_j$\\
\hline
$Z_{s_i,s_j}(D)$, $S_{s_i,s_j}(D)$ and $N_{s_i,s_j}(D)$ & $Z_{s_i}(D)$, $S_{s_i}(D)$ and $N_{s_i}(D)$ computed in the local vision of $s_j$\\
\hline
\end{tabular}
\end{table}

Let us consider a point $D$, not necessarily co-located with a site.
We define $Z_{s_i}(D)$ as the union of the points of the segments that connect points of $I_{s_i}$ to $D$; $S_{s_i}(D)$ (the {\em Segments of Interest} of $s_i$ towards $D$) is defined as the intersection between $s_i$'s cell sides, and $Z_{s_i}(D)$; $N_{s_i}(D)$ is defined as the set of neighbors of $s_i$ whose sides share with $V_{s_i}$ at least one point $\in S_C(D)$. Finally, we define $V_{s_i,s_j}$, $I_{s_i,s_j}$, $Z_{s_i,s_j}(D)$, $S_{s_i,s_j}(D)$ and $N_{s_i,s_j}(D)$ in a manner analogous to $V_{s_i}$, $I_{s_i}$, $Z_{s_i}(D)$, $S_{s_i}(D)$ and $N_{s_i}(D)$, but computed with only the local information of $s_j$. Table~\ref{tab:def} summarizes the definitions presented in this section.

An \textbf{AoI-cast} is a routing protocol that delivers a packet to all the sites whose cells intersect the AoI.
The general strategy for an efficient AoI-cast is routing a packet from the sender to a site $s_i$ located into the AoI, and afterwards to create a distribution tree from $s_i$.
The lower limit for the number of required packets, corresponding to performing an AoI-cast over a tree, is equal to the number of sites in the AoI minus $1$. The rest of the paper considers that a packet has already reached one site in the AoI, and we are concerned with either routing it to another site in the AoI (unicast), or reaching all the sites in the AoI (AoI-cast).

\section{MABRAVO algorithms}\label{sec:theoalgo}

This section presents the MABRAVO protocol suite.
Recall from Section\ref{sec:model} that we define that site $s_i$ is part of the AoI {\bf if and only if} the intersection of its Voronoi cell $V_{s_i}$ and the AoI is not the empty set.

The MABRAVO routing algorithms consider the dichotomy between a ``global vision'' of the network and the ``local vision'' of a particular site, by defining the local vision of a site as the Voronoi tessellation computed by the site using only
the location data of its immediate neighbors.
Some previous works, for example VoRaQue~\cite{Voraque}, make use of non-local information such as knowledge regarding neighbors of neighbors. The maintenance of such information is prone to either a big communication overhead or data obsolescence. In fact, in order to have up-to-date information for the routing, each site needs to exchange a high number of messages with its neighbors, and this constitutes a burden on the overall performance of the system. Otherwise, there is an increased risk to incur in wrong forwarding decisions due to aged data.

In order to overcome these problems, we propose routing algorithms based uniquely on local information, i.e. a site knows only about its own location, and its Voronoi neighbors' locations \xx and IDs. \yy In doing this, we face issues like the ones presented in Fig.~\ref{fig:neighbors0}, and in Fig.~\ref{fig:aoi0}, which are mainly due to the discrepancies between one site's local vision of the network, and the real topology of the Voronoi diagram (global vision).

\begin{figure*}[!t]
\centering
\centering
\includegraphics[width=2.75in]{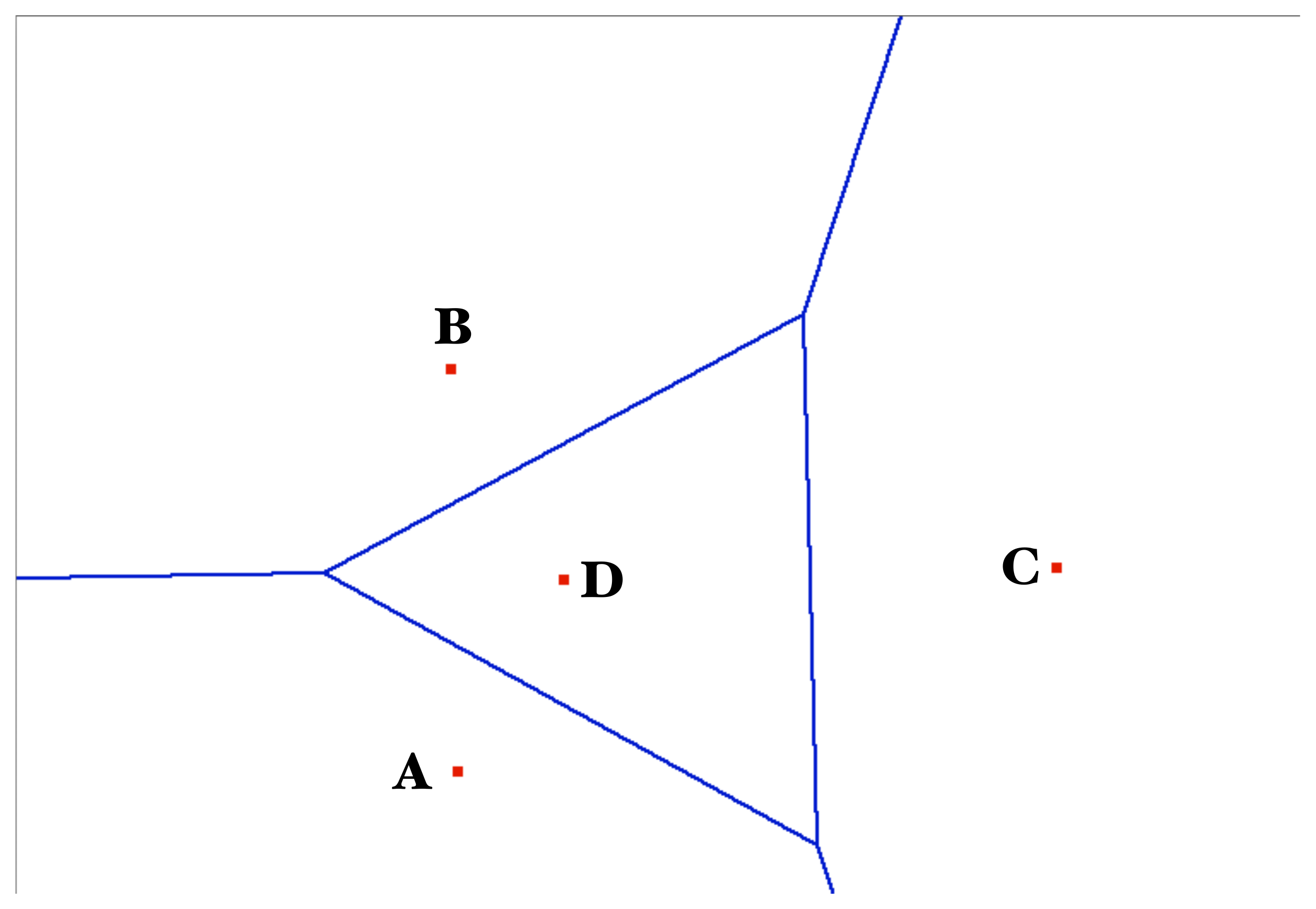}
\hfil
\centering
\includegraphics[width=2.75in]{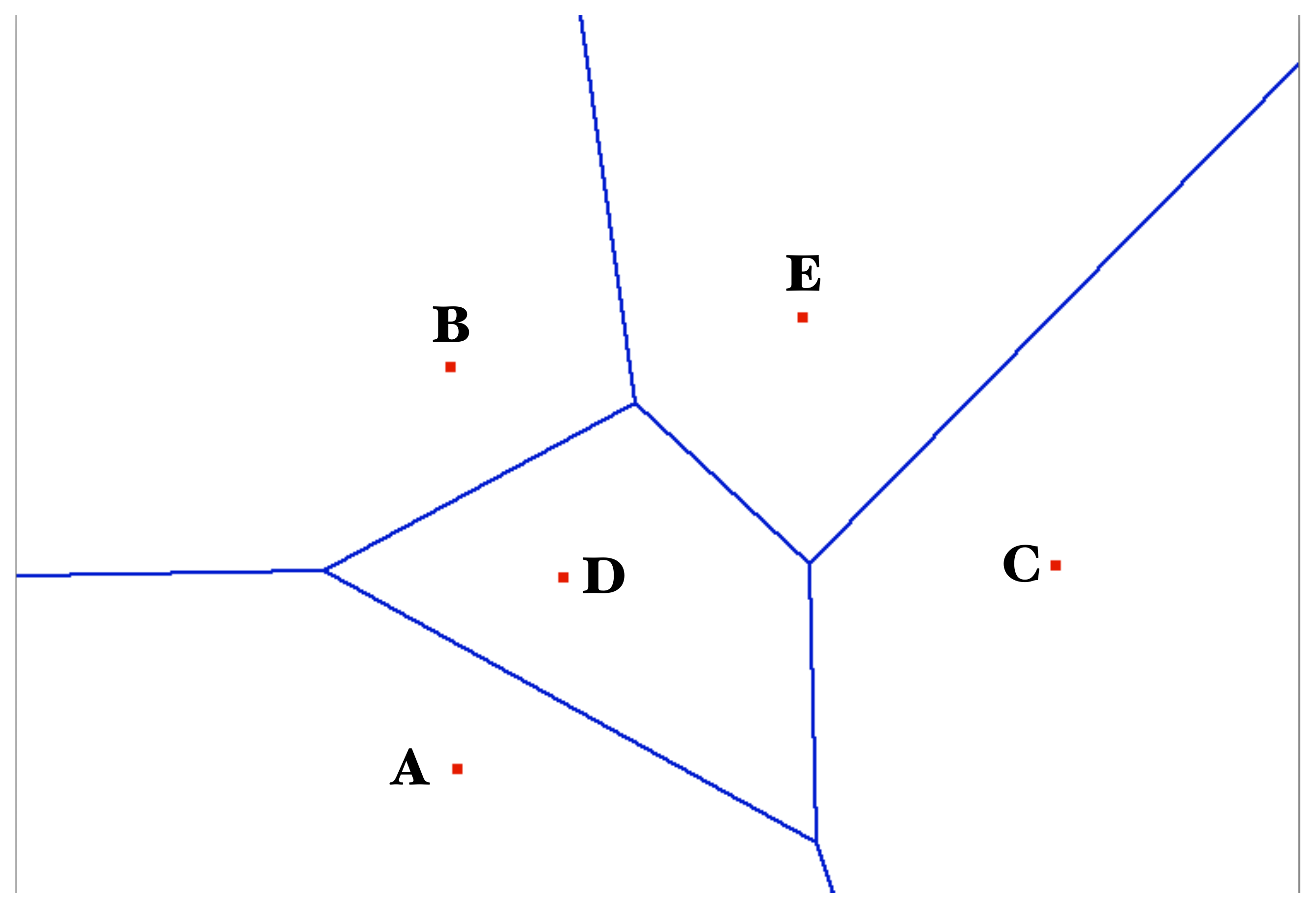}
\caption{Comparison between the local vision of site $A$ (left) and the real vision of the system (right). Site $A$ wrongly believes that $B$ and $C$ are neighbors }
\label{fig:neighbors0}
\end{figure*}

\begin{figure*}[!t]
\centering
\includegraphics[width=2.75in]{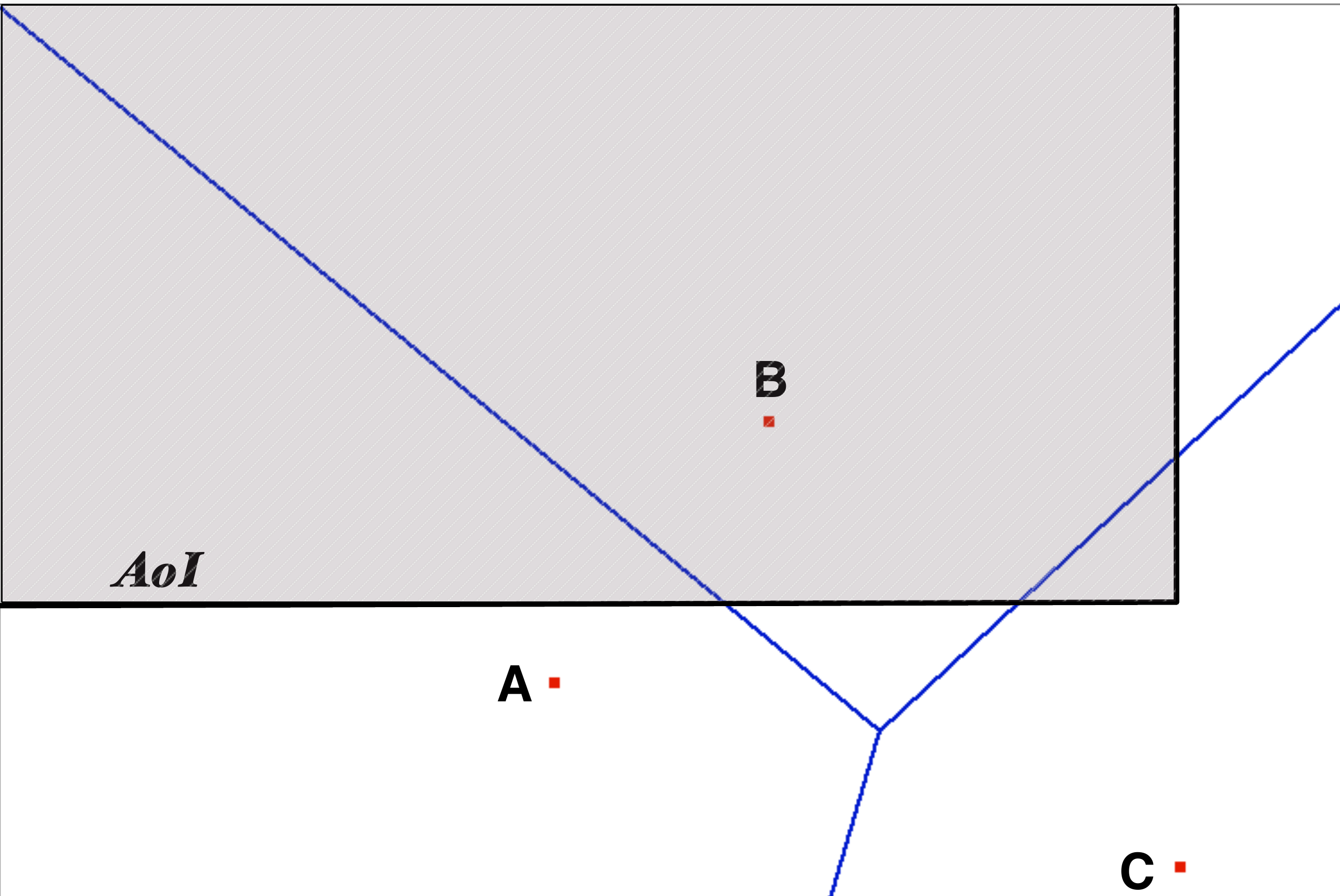}
\hfil
\centering
\includegraphics[width=2.75in]{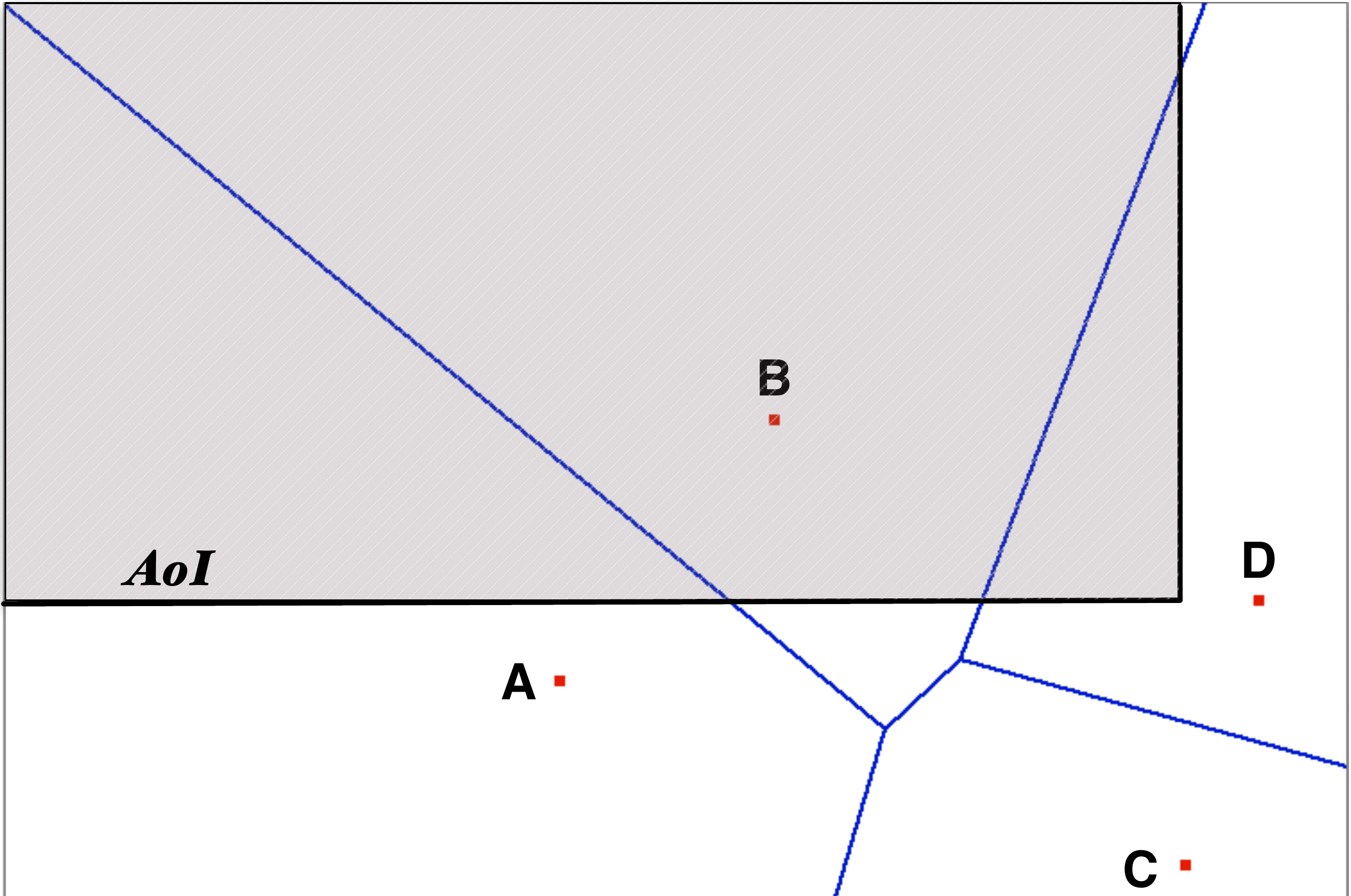}
\caption{Comparison between the local vision of site $A$ (left) and the real vision of the system (right). Site $A$ wrongly believes that $C$ is inside the $AoI$}
\label{fig:aoi0}
\end{figure*}

Left part of Fig.~\ref{fig:neighbors0} presents the local vision of site $A$, while right part of Fig.~\ref{fig:neighbors0} presents the global vision of the same area.
Let us suppose that an AoI-cast is being performed, for example using the routing protocol from~\cite{albanoCSE}, that $A$ received a packet to be delivered to $C$, and that $A$ believes that $B$ received the packet already.
$A$ may wrongly believe that sites $B$ and $C$ are mutual neighbors, and $A$ may consider that site $B$ is in charge of forwarding a packet it received to $C$, thus $A$ will not forward the packet to $C$ itself. In this particular case, it can happen that site $C$ will not receive the packet from any site, since $B$ and $C$ are not neighbors and $E$ could not lay in the AoI.

Another possible case is presented in Fig.~\ref{fig:aoi0}, where the left part of the figure is the local vision of $A$ and the right part represents the global vision.
In this case, $A$ could believe that its neighbor $C$ has a non-void intersection with the $AoI$.
Thus, $A$ decides to send the packet to $C$. The result is a useless message, since $C$'s Voronoi cell has no intersection with the $AoI$ and $C$ should not receive any message.

The purpose of the MABRAVO protocol suite is to overcome both the above problems, and to realize algorithms that, using the local vision of any site, are able to perform correct unicast and AoI-cast communication. Both communication modes avoid to contact unrelated sites (sites whose cells have no intersection with the AoI), and the AoI-cast uses the minimal number of packet transmissions, equal to the number of sites in the AoI minus 1.

The rest of this section describes the proposed MABRAVO protocol suite, starting with a discussion on the requirements that must be satisfied by the sites in terms of available primitives to allow an efficient implementation of unicast and AoI-cast routing (Subsection~\ref{subsec:req}), then presenting the unicast protocol MABRAVO$_D$ and proving its correctness (Subsection~\ref{subsec:theounic} and Subsection~\ref{subsec:theounicproof}), and then doing the same for the AoI-cast protocol MABRAVO$_R$ (Subsection~\ref{subsec:theomult} and Subsection~\ref{subsec:theomultproof}).

\subsection{Requirements for the algorithms}\label{subsec:req}
Recall from Section~\ref{sec:model} the definition of side and vertex of a Voronoi cell.
To be able to efficiently execute the algorithms, each site $s_i$ has to maintain a data structure with its Voronoi cell's vertices and sides.
For each vertex $v_{i,j,k}$, the data structure must be able to provide $s_j$ and $s_k$, which are the two Voronoi neighbors of $s_i$ that are at the same distance from $v_{i,j,k}$.
For each side $l_{i,j}$, the data structure must be able to provide $s_j$, which the neighbor of $s_i$ that is adjacent to $l_{i,j}$.
Moreover, given a neighbor $s_j$, the data structure must be able to provide the side $l_{i,j}$ that is adjacent to both $s_i$ and $s_j$, and the two Voronoi vertices $v_{i,j,k}$ and $v_{i,j,l}$ that are shared by the Voronoi cells of $s_i$ and $s_j$.

We propose to use circular lists for the neighbors, sides, and Voronoi vertices of $s_i$.
The data structures get updated whenever a new site is inserted or removed from the Voronoi diagram, and the cost of querying and updating the data structures of site $s_i$ is proportional to the number of neighbors of $s_i$. Since we are considering Voronoi diagrams in the plane, it has been proven that the mean number of neighbors of a site has an \x expected \y value lower than $6$ (see for example~\cite{Aurenhammer}) over large Voronoi diagrams. \x Thus, the expected cost for querying and updating the data structure used by the MABRAVO routing protocols is $O(1)$.

\xx We consider that each node is assigned a unique ID, which will be used to break ties in the MABRAVO algorithms. Various techniques are available for creating unique IDs in decentralised, distributed systems (e.g.~\cite{golodoniuc2017distributed, mahalle2020rethinking, kortesniemi2019improving}, just to name some recent examples). \yy

On a final note, we consider that the MABRAVO routing protocols can use a Time to Live (TTL) mechanism similar to the one of the AODV protocol~\cite{perkins2003ad}. In fact, the MABRAVO routing protocols are proved to work properly only when the topology is maintained timely, and the TTL mechanism protects the network in case the protocols are used in very dynamic networks before the topology gets maintained.
\y

\begin{figure*}[!t]
\centering
\centering
\includegraphics[width=2.75in]{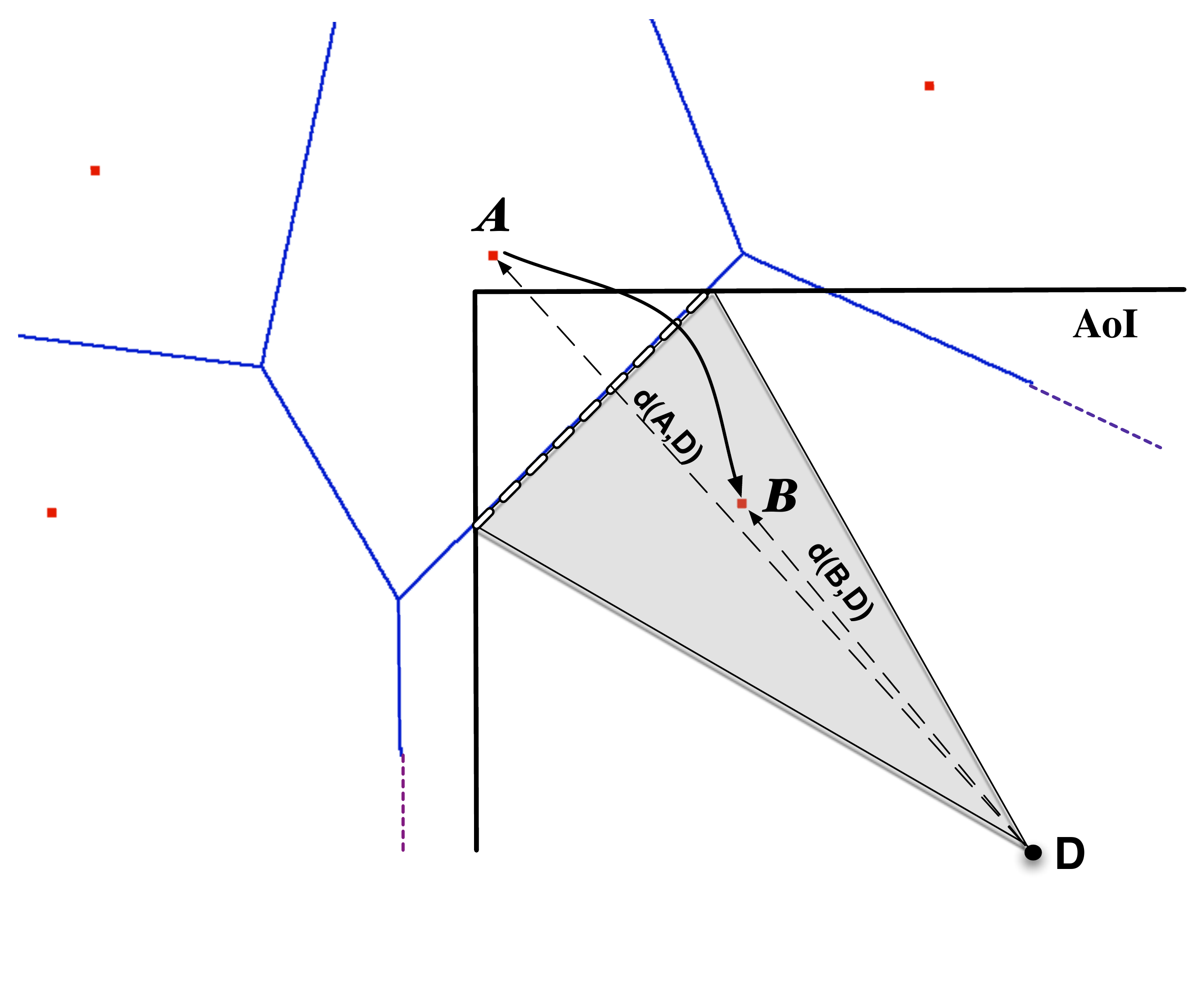}
\hfil
\centering
\includegraphics[width=2.75in]{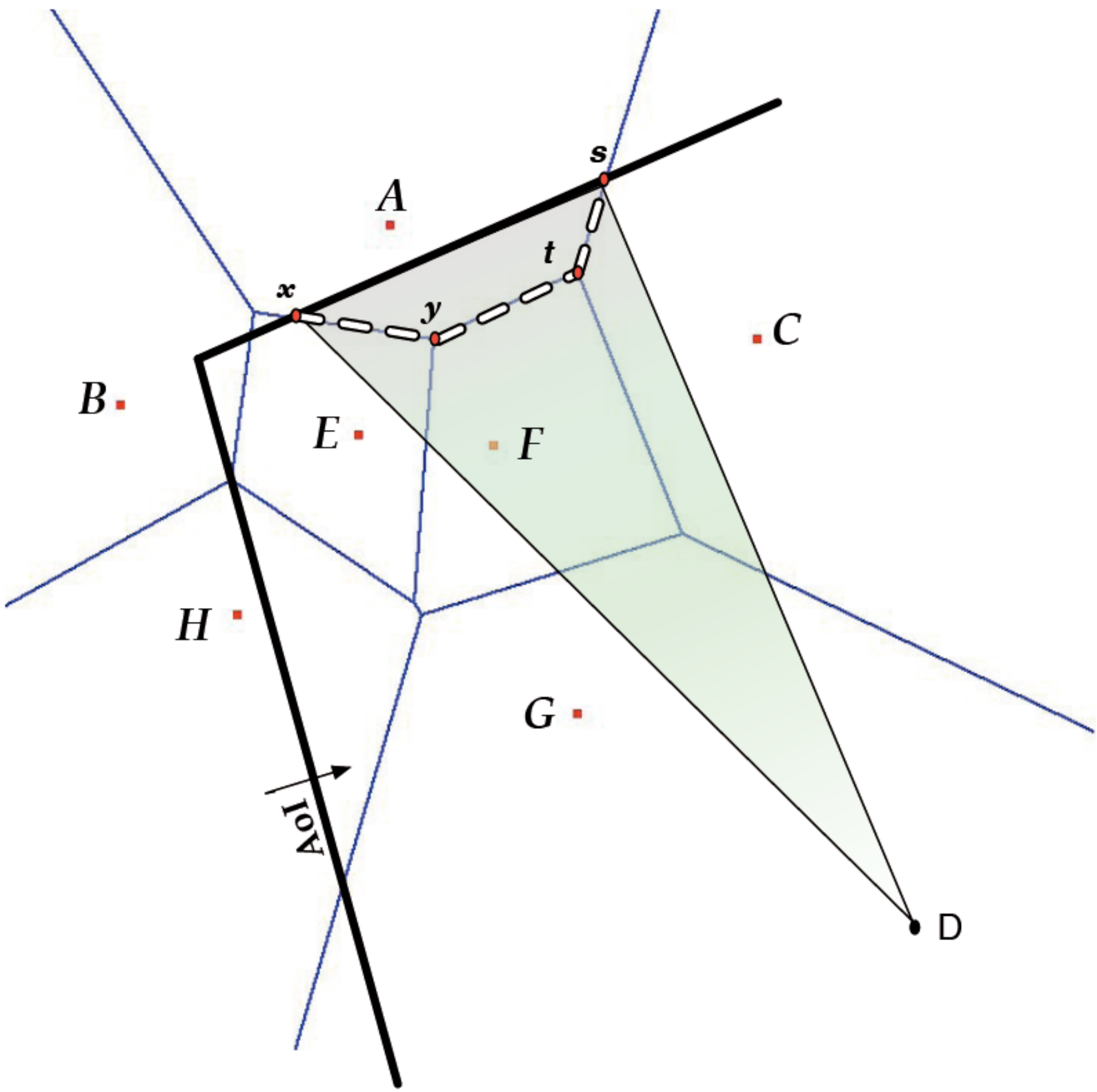}
\caption{Unicast routing step when $N_A(D)$ appears to be empty (left) and non-empty (right).}
\label{fig:unicastA}
\end{figure*}

\subsection{Unicast algorithm MABRAVO$_D$}\label{subsec:theounic}

Let us consider a site $C$ that is forwarding a packet towards the point $D$. $D$ can be co-located with a site or not; in the latter case, it must still belong to the AoI, and thus to the cell of a site belonging to the AoI. Site $C$ can be the initiator of the routing process, or an intermediate (and potentially final) hop.
Let us consider the actions performed by $C$ to decide which site to route the packet to and let us, with an abuse of notation, consider that $C$ is also the $C$'s index among the sites of the diagram (i.e.: $C = s_C$, sides of $V_C$ are $l_{C,i}$, and vertices of $V_C$ are $v_{C,i,j}$).

A unicast routing process will necessarily terminate with success if it respects the following properties:
\begin{itemize}
\item each site that has to forward the message shall be able to identify at least one other site to forward the message to (existence of next hop);
\item each site that has to forward the message shall uniquely identify the next forwarding site (unicity of next hop);
\item each routing step shall bring the packet closer (according to a given metric) to the destination point than the current site (implying the finiteness of the route).
\end{itemize}

If the current site $C$ is the closest to $D$ among its neighbors, the packet can be delivered and the routing process is completed. If not, it is necessary to route the packet to another site, and the quantities 
$Z_C(D)$, $S_C(D)$ and $N_C(D)$ (please refer to Section~\ref{sec:model} for their definitions) cannot be empty: $Z_C(D)$ contains at least one point $x$ since $C$ is in the AoI, $S_C(D)$ contains at least the intersection $y$ between one side of $V_C$ and the segment $xD$, $N_C(D)$ contains at least the neighbor of $C$ adjacent to the side containing $y$.

The unicast routing algorithm builds $N_C(D)$ by considering the vertices of $V_C$ laying in the AoI, and adding all neighbors that have a vertex in common with $C$ that is located in the AoI. This way of computing $N_C(D)$ can lead either to a non-empty set, or an empty set.
If $N_C(D)$ appears to be empty, it means that the AoI crosses a side twice (see left part of Fig.~\ref{fig:unicastA}), and there is only one potential next hop in $N_C(D)$, which will receive the packet.

If the $N_C(D)$ was not empty (see right part of Fig.~\ref{fig:unicastA}), the algorithm moves on the circular list of neighbors in $N_C(D)$ until it finds the one with lowest angle $\angle D C s_i$, and sends the packet to $s_i$; \xx in case of a tie, $C$ chooses the site with the lowest ID. \yy
For the latter case, an example is given in the right part of Fig.~\ref{fig:unicastA}. Site $A$ wants to send a packet towards the destination point $D$. Vertices $y$ and $t$ are in the AoI, thus the sites that will be considered as potential next hops are $E, F$ and $C$, since the three of them are also closer to $D$ than $A$. Site $B$ has a non-empty intersection with the $AoI$ in $A$'s local vision, but the side between $A$ and $B$ is not included in the \textit{segments of interest} and thus $B$ is discarded. Among the three sites, $F$ is selected since $DAF < DAE$ and $DAF < DAC$.

\begin{algorithm}[!t]
\caption{Algorithm MABRAVO$_D$, executed by $C$, having $D$ as destination}
\label{alg:impl:direct}
\DontPrintSemicolon
\SetAlgoLined
\If{ $\forall s_i$ neighbors of $C$, $D C < D s_i$}{
    Deliver the packet to $C$ \;
    \Return{}
}
Let $L = $ the set of all neighbors of $C$ \;
\ForEach{$s_i \in L$}{
    \lIf{ $D C < D s_i$}{
        Remove $s_i$ from $L$
    }\Else{
        \lIf{ ($v_{C,i,j} \notin AoI$) and ($v_{C,i,k} \notin AoI$)}{
            Remove $s_i$ from $L$
        }
    }
}
\eIf{$L = 0$}{
    \ForEach{ $s_i$ neighbor of $C$}{
        \lIf{ $D s_i < D C$}{
            \For{ all $q \in $ \textbf{sides} $(AoI)$}{
                \If{ $q \cap l_{i,C} \neq 0$}{
                    Send packet to $s_i$ \;
                    \Return{}
                }
            }
        }
    }
}{
    \ForEach{ $s_i \in L$}{
        Compute $a_i = \angle D C s_i$
    }
    Send packet to neighbor $s_{m}$ with lowest $a_{m}$, and lowest ID in case of a tie \; \Return{}
}
\end{algorithm}
The formal algorithm is reported in Algorithm~\ref{alg:impl:direct}, and next subsection proves that the MABRAVO$_D$ algorithm is correct.

\subsection{Proof of correctness for the unicast algorithm MABRAVO$_D$}\label{subsec:theounicproof}

\begin{theorem}
Given a site $C$ that is not the destination for a message, if $C$ receives a packet, there exists a site detected by the algorithm which is the next destination of the packet.\label{th:uni:exi}\end{theorem}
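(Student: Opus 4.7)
The plan is to verify that algorithm MABRAVO$_D$, when run at any site $C$ that is not the destination of the packet, never falls through all its branches: some neighbor is always selected. The pseudocode has three branches -- immediate delivery, the ``$L \neq \emptyset$'' branch, and the ``$L = \emptyset$'' branch -- so I would rule the first out and then show that the remaining two together cover all possibilities.

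First I would argue that the initial delivery test must fail. Because $V_C$ is the intersection of the half-planes determined by all other sites, and only the half-planes of the immediate Voronoi neighbors of $C$ contribute to its boundary, the ``local vision'' of $V_C$ coincides with the global one. Since $C$ is not the destination, $D \notin V_C$, and by convexity there is a side $l_{C,j}$ of $V_C$ strictly separating $D$ from $C$; the neighbor $s_j$ across that side satisfies $Ds_j < DC$, because any point strictly on the $s_j$-side of the perpendicular bisector is strictly closer to $s_j$ than to $C$. Hence the guard ``$\forall s_i : DC < Ds_i$'' is violated and the algorithm proceeds to construct $L$.

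If the resulting $L$ is non-empty, the algorithm trivially picks one element: $\angle DCs_i$ is a well-defined real for every $s_i \in L$, the minimum is attained, and the ID tie-breaking rule singles out a unique $s_m$. The non-trivial case is $L = \emptyset$. Here I would use the geometry already sketched in Section~\ref{subsec:theounic}. Pick any $p \in I_C$, which is non-empty because $C$ is in the AoI. The segment $pD$ is contained in the AoI by convexity, starts in $V_C$ (at $p$), and ends outside $V_C$ (at $D$); it therefore crosses $\partial V_C$ at some point $y$ lying on a side $l_{i,C}$. The same bisector argument as above gives $Ds_i < DC$. Moreover $L = \emptyset$ forces both vertices $v_{C,i,j}$ and $v_{C,i,k}$ of $l_{i,C}$ to lie outside the AoI, while $y \in l_{i,C} \cap \textrm{AoI}$; hence the boundary of the convex AoI must cross $l_{i,C}$, so there is a side $q$ of the AoI with $q \cap l_{i,C} \neq \emptyset$. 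The inner loop in the $L = \emptyset$ branch therefore fires for this $s_i$, and a next hop is produced.

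The delicate point is precisely this $L = \emptyset$ analysis: it requires matching a global geometric witness (the exit point $y$ on a side of $V_C$) with a purely local test executed by $C$ (intersection of $V_C$-sides with AoI-sides). A minor technical nuisance is the degenerate case in which $pD$ exits $V_C$ exactly through a vertex $v_{C,i,j}$, but such a vertex would then lie in the AoI, contradicting $L = \emptyset$; so the exit point can always be assumed to lie in the relative interior of some side, which closes the argument.
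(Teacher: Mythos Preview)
Your argument is correct and rests on the same geometric observation as the paper: take any $P\in I_C$, draw the segment $PD$, and use the side of $V_C$ that it crosses to exhibit a neighbor closer to $D$; the paper states this in two lines via $q\in S_C(D)\Rightarrow N_C(D)\neq\emptyset$, whereas you unfold the same witness through the three branches of Algorithm~\ref{alg:impl:direct}. The only substantive difference is that you explicitly verify the $L=\emptyset$ branch (showing that the exit side must then be cut by an AoI side), a case the paper leaves to the informal discussion preceding the theorem; the core idea is identical.
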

\begin{proof}
Take a point $P$ in $I_C$, which cannot be the empty set since $C \in AoI$.
Connect $P$ to $D$ with a straight segment, and consider the segment's intersection with $C$'s Voronoi sides. Considering the definition of Voronoi cell $V_C$ as given by Equation~\ref{eq:cell}, we can have two possible cases:
\begin{itemize}
\item
if there is no intersection, $PD$ lies in $C$'s Voronoi cell $V_C$. In this case, $D \in V_C(D)$, site $C$ owns point $D$, and the routing process is completed with success;
\item
otherwise, notice that the segment lays in $I_C$ since both $V_C$ and the AoI are convex.
Consider the intersection $q$ between $\overline{PD}$ and $V_C$'s borders. $q \in S_C(D)$, hence $N_C(D)$ comprises at least the site on the other side of $q$, which is closer to $D$ than $C$.
Thus, there exists at least one possible site to forward the message to.
\end{itemize}
\end{proof}
\begin{theorem}The fact that MABRAVO$_D$ forwards the packet from a site $C$ to a site $B$, implies that $B$ is closer to $D$ than $C$.\label{theo:distances}\end{theorem}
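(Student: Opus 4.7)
The plan is to perform a case analysis matching the two code paths in Algorithm~\ref{alg:impl:direct} at which the packet is actually forwarded to a neighbor. The first is the \texttt{if $L = 0$} branch, where the emission of the packet to $s_i$ is guarded by the explicit test $Ds_i < DC$. In that branch, the conclusion $DB < DC$ is immediate by inspection of the code and requires no geometric argument.

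The substantive work happens in the \texttt{else} branch, where the recipient $s_m$ is chosen from $L$ as the neighbor minimising $\angle D C s_m$ (with IDs used as a tiebreaker). To handle it, I would trace the construction of $L$: it starts as the full neighbour set of $C$, and the first sub-branch of the \texttt{foreach} loop deletes every $s_i$ with $DC < Ds_i$. Hence every surviving $s_m \in L$ automatically satisfies $Ds_m \le DC$, which is already the bulk of the statement. The remaining step is to upgrade the weak inequality to a strict one.

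For the strict upgrade I would use the contrapositive of the line~1 termination test. Concretely, $Ds_m = DC$ would force $D$ to lie on the perpendicular bisector of $\overline{C s_m}$, which coincides with the shared Voronoi side $l_{C,m}$; in such a configuration $D$ sits on the common boundary of $V_C$ and $V_{s_m}$, and the tie-breaking rule by ID can be invoked to show that, under the generic-position convention tacitly used elsewhere in the paper (so that $D$ is not simultaneously on two distinct Voronoi sides of $C$), the packet would have already been delivered at line~1 rather than entering the forwarding stage. Together with $Ds_m \le DC$ from the previous paragraph, this yields $Ds_m < DC$, i.e. $DB < DC$.

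The main obstacle I anticipate is precisely this tie case $Ds_m = DC$: the algorithm's filtering rule only certifies $\le$ by syntax, so the \emph{strictly closer} assertion requires a small geometric argument, or equivalently a generic-position assumption on the positions of the sites and of $D$. I would state this assumption explicitly at the start of the proof, after which the rest is a routine verification across the two algorithmic branches.
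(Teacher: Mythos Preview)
Your approach differs from the paper's. The paper gives a one-line geometric argument: the recipient $B$ lies in $N_C(D)$, so by definition some segment from $I_C$ to $D$ crosses the shared side $l_{C,B}$; since that side sits on the perpendicular bisector of $\overline{CB}$, the endpoint $D$ must fall on the $B$-half-plane, yielding $BD<CD$. Your route is instead a direct code inspection of Algorithm~\ref{alg:impl:direct}, reading off the distance constraints from the two branches. Both are legitimate; the paper's proof ties the claim back to the $N_C(D)$ abstraction used throughout Section~\ref{sec:theoalgo}, while yours is closer to a verification of the pseudocode and does not need the reader to unfold $N_C(D)$, $S_C(D)$, $Z_C(D)$.

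There is, however, a genuine slip in your tie-case paragraph. Two points:
\begin{itemize}
\item From $Ds_m = DC$ you correctly get that $D$ lies on the perpendicular bisector of $\overline{Cs_m}$, but that bisector is a full line; the Voronoi side $l_{C,m}$ is only a segment of it. So ``$D$ sits on the common boundary of $V_C$ and $V_{s_m}$'' does not follow.
\item More importantly, the ``contrapositive of line~1'' does not say what you need. Line~1 delivers when $DC < Ds_i$ for \emph{every} neighbour; if $DC = Ds_m$ that condition fails, so the algorithm does \emph{not} deliver at line~1 --- it proceeds to the forwarding stage. The tie therefore cannot be dispatched by appealing to early termination.
\end{itemize}
The clean repair is the one you gesture at anyway: state the generic-position hypothesis (no neighbour $s_i$ with $Ds_i = DC$) at the outset, after which the filter ``remove $s_i$ if $DC < Ds_i$'' already leaves only neighbours with $Ds_i < DC$, and the \texttt{else}-branch case is immediate without any further argument. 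Note that the paper's own proof also implicitly relies on $D$ not lying on the bisector when it asserts that $D$ ``resides on the other side of the border,'' so you are not adding an assumption the paper avoids.
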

\begin{proof}
The chosen site $B$ is in $N_C(D)$. From definition of $N_C(D)$, at least one of the segments connecting $I_C$ to $D$ crosses the border between $C$ and $B$. Hence, using the definition of a cell's borders presented in Eq.~\ref{eq:cell}, since $D$ resides on the other side of the border between $B$ and $C$, we have proved that $BD < CD$.
\end{proof}
\begin{theorem}Unicast route is unique, and finite.\label{th:uni:uni}\end{theorem}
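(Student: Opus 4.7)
The plan is to split the claim into two independent pieces, uniqueness of the next hop at each site and finiteness of the overall route, and then observe that together they imply the route as a whole is unique and finite.

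For finiteness, I would appeal directly to Theorem~\ref{theo:distances}: every time MABRAVO$_D$ forwards a packet from $C$ to a site $B$, we have $\|B,D\|<\|C,D\|$. This means that the sequence of distances from the current carrier of the packet to $D$ is strictly decreasing along the route. Since the set of sites $S$ is finite, no site can appear twice in the route (otherwise the strict monotonicity would be violated), so the route visits at most $|S|$ distinct sites and must therefore terminate after a finite number of hops. I would also note that Theorem~\ref{th:uni:exi} guarantees the existence of a next hop at every intermediate site, so the algorithm cannot get stuck before reaching the terminating condition (the current site being the closest to $D$ among its neighbors).

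For uniqueness at a given site $C$, the argument is a case analysis on the branches of Algorithm~\ref{alg:impl:direct}. In the branch where $L\neq\emptyset$, the algorithm selects the neighbor $s_m\in L$ minimising the angle $a_m=\angle D C s_m$; ties are broken by the smallest unique ID. Since IDs are globally unique (as stated in Subsection~\ref{subsec:req}), the minimiser is unambiguously defined and hence the next hop is uniquely identified. In the branch where $L=\emptyset$, I would argue that the degenerate configuration (the AoI crossing a single Voronoi side $l_{i,C}$ twice, as in the left part of Fig.~\ref{fig:unicastA}) singles out exactly one neighbor $s_i$ satisfying both $\|D,s_i\|<\|D,C\|$ and $q\cap l_{i,C}\neq\emptyset$ for some AoI side $q$: the other candidates were already eliminated when building $L$ because their shared vertices with $C$ lay outside the AoI, and the double crossing of the AoI boundary with $l_{i,C}$ is what made those vertices external while still producing a valid forwarding direction. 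Combining this uniqueness at every hop with finiteness yields that the whole route is uniquely determined by the initial site and the destination $D$.

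The main obstacle will be the $L=\emptyset$ branch. Showing deterministic termination of the loop is trivial, but proving that there is only one neighbour for which the condition $q\cap l_{i,C}\neq\emptyset$ with $\|D,s_i\|<\|D,C\|$ holds requires a careful geometric argument exploiting the convexity of both $V_C$ and the AoI: a convex region can cross a given side of another convex region in at most a single connected segment, so at most one Voronoi side of $V_C$ can host such a double crossing while all other sides of $V_C$ are either entirely inside or entirely outside the AoI. This observation, together with the earlier elimination via vertex membership in the AoI, should close the case and complete the uniqueness part.
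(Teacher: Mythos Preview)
Your finiteness argument is essentially the paper's: both appeal to Theorem~\ref{theo:distances} to get a strictly decreasing sequence of distances to $D$, then use finiteness of $S$ to bound the number of hops. Your added remark that Theorem~\ref{th:uni:exi} prevents the algorithm from stalling before termination is a nice touch the paper omits.

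For uniqueness, the paper takes a much shorter route than you do: it simply observes that Algorithm~\ref{alg:impl:direct} is deterministic (no random choices), so at each site it outputs a single next hop, and therefore the whole route is uniquely determined. No case analysis on the branches is performed. Your approach instead tries to prove that in each branch there is exactly one \emph{semantic} candidate for the next hop. That is a strictly stronger claim than what the theorem needs, and in the $L=\emptyset$ branch it is the source of the difficulty you flag: you would have to show that at most one side $l_{i,C}$ with $\|D,s_i\|<\|D,C\|$ can be crossed by an AoI boundary segment. This is not obvious (convexity alone does not immediately rule out the AoI boundary entering and leaving $V_C$ through two different ``closer-to-$D$'' sides), and your sketch conflates ``the AoI crosses $l_{i,C}$ twice'' with the algorithm's actual test ``some AoI side $q$ intersects $l_{i,C}$''. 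The paper sidesteps all of this: even if several neighbours satisfied the test, the \texttt{foreach} loop traverses the circular neighbour list of Subsection~\ref{subsec:req} in a fixed order and returns on the first match, which is already enough for determinism and hence for uniqueness of the route.
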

\begin{proof}
Since the algorithm for unicast routing is deterministic (it has no random component), at each step it can choose only one site as the next hop to go towards the routing destination D. Hence, the routing path is uniquely defined by the routing algorithm.
Each routing step brings the packet to a site that is closer to $D$ than the preceding site. Thus, a route can have at most as many hops as the number of sites in the network. Thus, the route is finite.
\end{proof}

\subsection{AoI-cast algorithm MABRAVO$_R$}\label{subsec:theomult}

This subsection presents MABRAVO Reverse (MABRAVO$_R$), which is an AoI-cast protocol that builds over the results presented in subsection~\ref{subsec:theounicproof} to compute AoI-cast trees in a distributed manner with local information only.
The rationale is that the algorithm MABRAVO$_R$, formalized in Algorithm~\ref{alg:impl:inv}, understands if MABRAVO$_D$ would route a packet from $C$ to $D$, and in that case $D$ sends the packet to $C$ while executing MABRAVO$_R$. The algorithm performs correct routing and minimizes the number of exchanged messages, by delivering
\begin{itemize}
\item
one message - and one message only - to each site whose Voronoi cell has a non-void intersection with the AoI, and
\item
no messages to sites outside the AoI.
\end{itemize}

\begin{figure*}[t]
\begin{center}
\includegraphics[width=0.48\linewidth,height=0.44\linewidth]{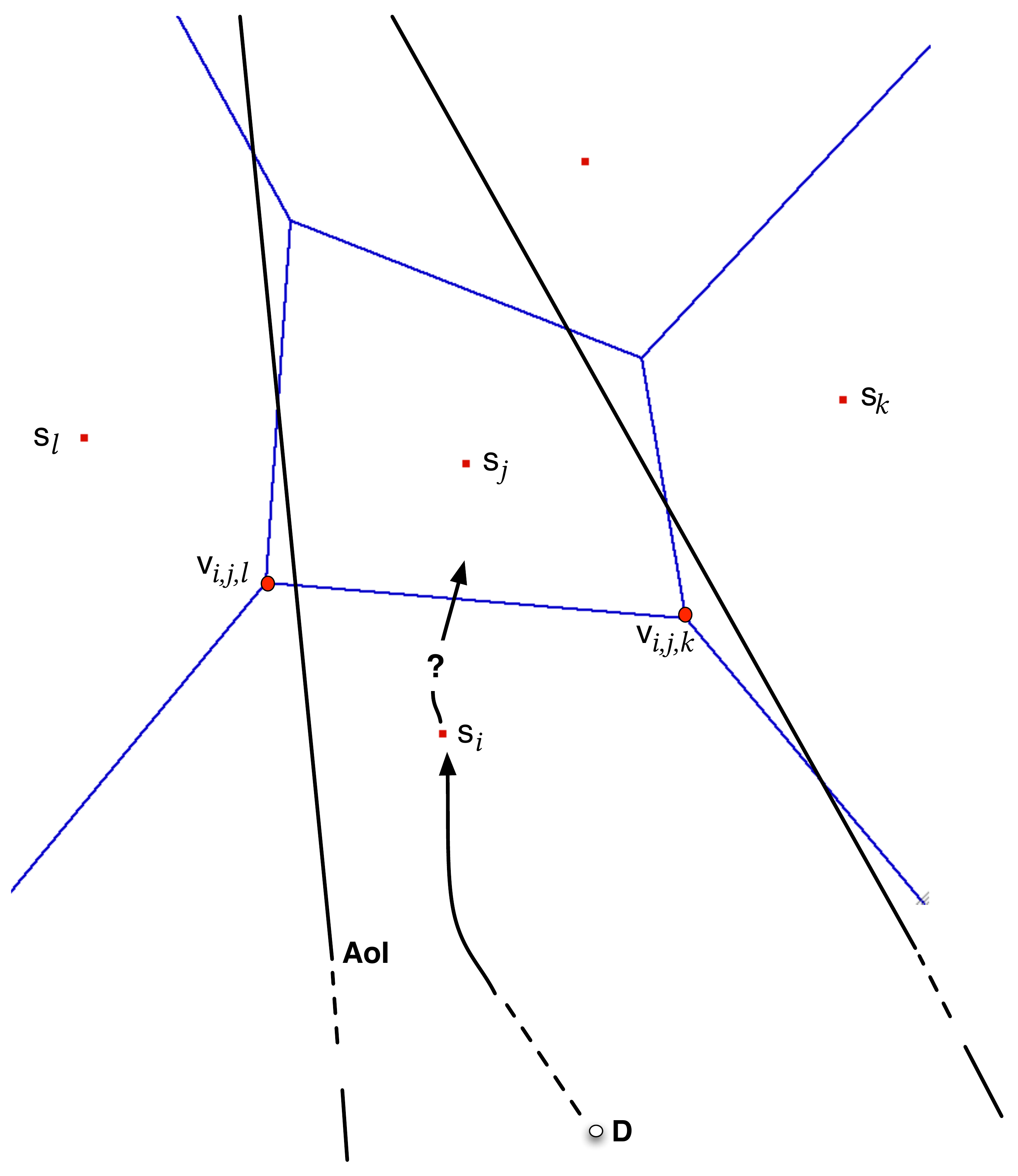}
\caption{Example of MABRAVO$_R$ routing.\label{fig:impltree}}
\end{center}
\end{figure*}

Let us start the presentation with an example regarding the execution of MABRAVO$_R$ on \x Fig.~\ref{fig:impltree}, \y where a site $s_i$ verifies if it should send the packet originated in $D$ to site $s_j$. The algorithm will do that if site $s_j$ would send a packet to $s_i$ to reach the destination $D$ when using the MABRAVO$_D$.
First of all, if $s_i D > s_j D$, $s_j$ can not be child of $s_i$ in the AoI-cast tree.
Let us now call $s_k$ and $s_l$ the two sites that are common neighbors of $s_i$ and $s_j$
(it is possible that one of the sites or both do not exist). Let also be
$v_{i,j,k}$ the Voronoi vertex adjacent to $s_i, s_j$ and $s_k$ and let be $v_{i,j,l}$ the Voronoi vertex adjacent to $s_i, s_j$ and $s_l$.
Algorithm~\ref{alg:impl:inv} considers two main cases:
\begin{itemize}
\item
Neither $v_{i,j,k}$ nor $v_{i,j,l}$ are in the AoI. In this case, $s_i$ checks
 if its border with $s_j$ crosses the AoI boundaries, in line with the routing performed in Fig.\ref{fig:unicastA}.
If this is true, $s_i$ is the only feasible next hop of $s_j$ in MABRAVO$_D$, thus
 $s_i$ sends the packet to $s_j$;
\item
$v_{i,j,k}$ or $v_{i,j,l}$ or both lay into the AoI. In this case, $s_k$ or $s_l$ or both sites are compared with
$s_j$. Let us consider for example that only $v_{i,j,k} \in$ AoI.
$s_i$ sends the packet to $s_j$ {\bf unless} both
\begin{itemize}
\item
$s_kD > s_iD$
\item
$\angle D s_j s_k < \angle D s_j s_i$, \xx or $\angle D s_j s_k = \angle D s_j s_i$ and ID of $s_k <$ ID of $s_i$ \yy
\end{itemize}
since it would mean that $s_k$ is better off than $s_i$ in sending the packet to $s_j$ according to MABRAVO$_D$ unicast routing algorithm.
\end{itemize}

Next section provides correctness proof for the MABRAVO$_R$ algorithm.

\begin{algorithm}[!tb]
\caption{Algorithm MABRAVO$_R$, executed by site $s_i$, having $D$ as source of the AoI-cast}
\label{alg:impl:inv}
\DontPrintSemicolon
\SetAlgoLined
Deliver the packet to $s_i$ \;
\ForEach{ $s_j$ neighbor of $s_i$}{
    \If{$Ds_i > Ds_j$}{
        Jump out to the main foreach cycle
    }
    Let $s_k$ and $s_l$ be the common neighbors of site $s_i$ and site $s_j$ \;
    \If{($v_{i,j,k} \notin AoI$) and ($v_{i,j,l} \notin AoI$)}{  
        \ForEach{$q \in \textbf{side}(AoI)$}{  
            \If{$q \cap v_{i,j,k} v_{i,j,l} \neq 0$}{ 
                Send packet to $s_j$\;
                Jump out to the main foreach cycle
            }
        }
    }
    \If{ $v_{i,j,k} \in AoI$ and $s_kD > s_iD$}{  
        \If {($\angle D s_j s_k < \angle D s_j s_i$) or ($\angle D s_j s_k = \angle D s_j s_i$ and ID of $s_k <$ ID of $s_i$)}{
            Jump out to the main foreach cycle
        }
    }
    \If{ $v_{i,j,l} \in AoI$ and $s_lD > s_iD$}{ 
        \If{ ($\angle D s_j s_l < \angle D s_j s_i$) or ($\angle D s_j s_l = \angle D s_j s_i$ and ID of $s_l <$ ID of $s_i$)}{
            Jump out to the main foreach cycle  
        }
    }
    Send packet to $s_j$\;
    Jump out to the main foreach cycle
}
\end{algorithm}

\subsection{Proof of correctness for the AoI-cast algorithm MABRAVO$_R$}\label{subsec:theomultproof}

\begin{theorem}
If $A$ computes a non-empty (local vision) $V_{C,A}$ of the Voronoi cell of $C$, then
$V_C \subseteq V_{C,A}$. Moreover, 
if $I_{C,A}$ exists, $I_C \subseteq I_{C,A}$.\label{th:rev:subset}\end{theorem}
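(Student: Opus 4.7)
The plan is to use the characterization of each Voronoi cell as an intersection of half-planes, and to observe that the local vision of $A$ simply drops some of the constraints that define the cell globally. From Equation~\ref{eq:cell}, I would write
\[
V_C = \bigcap_{s \in S,\, s \neq C} H_{C,s},
\]
where $H_{C,s} = \{p : \|p,C\| \le \|p,s\|\}$ and $S$ denotes the full set of sites in the network. I would then express $V_{C,A}$ in the same form, but restricting the intersection to the sites visible to $A$, namely $A$ itself together with its Voronoi neighbors (which include $C$, since $V_{C,A}$ is assumed to be non-empty and hence well-defined). Calling this local set $S_A$, we have $S_A \subseteq S$ and
\[
V_{C,A} = \bigcap_{s \in S_A,\, s \neq C} H_{C,s}.
\]

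From here the main argument is immediate: dropping constraints from an intersection can only enlarge the resulting set, so $V_C \subseteq V_{C,A}$. The second claim then follows by intersecting both sides of this containment with the convex region AoI: whenever $I_{C,A} = V_{C,A} \cap \textrm{AoI}$ is defined, we obtain
\[
I_C = V_C \cap \textrm{AoI} \subseteq V_{C,A} \cap \textrm{AoI} = I_{C,A}.
\]

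The main obstacle I anticipate is not the containment itself, which is essentially a one-line set-theoretic observation, but rather verifying cleanly that the half-planes used in constructing $V_{C,A}$ form a \emph{sub-collection} of those used in constructing $V_C$, rather than a different collection. This reduces to noting that $A$ cannot invent sites it does not know about: every half-plane $H_{C,s}$ that $A$ is able to compute is indexed by a real site $s$ of the network, and therefore also contributes to the global intersection defining $V_C$. Once this point is spelt out, no additional subtlety remains, and both containments follow without further work.
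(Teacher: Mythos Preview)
Your proposal is correct and follows essentially the same approach as the paper: both arguments rest on the observation that the local set of sites $S_A$ (the paper writes $S(A)$) is a subset of the global set $S$, so the defining constraints for $V_{C,A}$ form a sub-collection of those for $V_C$, whence $V_C \subseteq V_{C,A}$; the second containment then follows by intersecting with the AoI. The only cosmetic difference is that you phrase the argument via intersections of half-planes $H_{C,s}$, whereas the paper phrases it pointwise via the inequalities $d(P,C)\le d(P,s_i)$.
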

\begin{proof}
Given two sites $C$ and $A$, the Voronoi area of $C$ in the local vision of $A$, called $V_{C,A}$, exists {\bf if and only if} $A$ and $C$ are neighbors.
Let us consider point $P$, and the algorithm that is used to decide if $P \in V_{C,A}$.
Let us call $S$ the set of all sites in the Voronoi tessellation (global vision of the overlay), and $S(A)$ the set of sites in the local vision
 of $A$, which contains $A$ and $A$'s neighbors. From the definition of a Voronoi cell, we have that:
\begin{displaymath}
P \in V_{C,A}\textrm{ {\bf if~and~only~if} }
 \forall s_i \in S(A): d(P,C) \le d(P,s_i)
\end{displaymath}
\begin{displaymath}
P \in V_C \textrm{ {\bf if~and~only~if} }
 \forall s_i \in S: d(P,C) \le d(P,s_i)
\end{displaymath}
Since the set of the sites in a local vision is a subset of the set of all the neighbors ($S(A) \subseteq S$), the set of conditions for $P \in V_{C,A}$ is subset of the set of conditions for $P \in V_C$, and $P \in V_C \Rightarrow P \in V_{C,A}$. Thus, $V_C \subseteq V_{C,A}$.
Considering now the intersection between the AoI and the Voronoi
 cells, since $I_C = V_C \bigcap \textrm{AoI}$ and
 $I_{C,A} = V_{C,A} \bigcap \textrm{AoI}$, and we just showed that $V_C \subseteq V_{C,A}$,
it holds that 
 $I_C \subseteq I_{C,A}$.
\end{proof}

\begin{theorem}
$Z_C(D)$ and $Z_{C,A}(D)$ are convex.
\end{theorem}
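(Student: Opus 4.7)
The plan is to reduce the claim to the standard fact that the \emph{cone} from a point over a convex set is convex, and to verify that this standard argument survives the minor awkwardness in the definition (open segments, with $D$ re-attached).

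First I would argue that $I_C$ is convex, so that we actually have a convex set to cone over. By definition $I_C = V_C \cap \mathrm{AoI}$. The Voronoi cell $V_C$ is convex because the condition in Equation~\ref{eq:cell} describes it as an intersection of closed half-planes (one half-plane per competing site), and an intersection of convex sets is convex; the AoI is convex by the hypothesis made in Section~\ref{sec:model}. Hence $I_C$ is convex. The identical argument shows $I_{C,A}$ is convex: $V_{C,A}$ is again an intersection of half-planes (now taken over $A$'s local view of the sites), and $V_{C,A} \cap \mathrm{AoI}$ inherits convexity.

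Next I would parametrise $Z_C(D)$. Using the defining description in Table~\ref{tab:def}, every point of $Z_C(D) \setminus \{D\}$ lies on an open segment from some $x \in I_C$ to $D$, so it has the form $tD + (1-t)x$ with $t \in (0,1)$ and $x \in I_C$; together with $D$ (obtained formally at $t=1$) this gives
\begin{displaymath}
Z_C(D) \;=\; \bigl\{\, tD + (1-t)x \;:\; t \in (0,1],\; x \in I_C \,\bigr\}.
\end{displaymath}
Now take two points $p_1 = t_1 D + (1-t_1)x_1$ and $p_2 = t_2 D + (1-t_2)x_2$ in $Z_C(D)$ and $\lambda \in [0,1]$. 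Setting $t = \lambda t_1 + (1-\lambda)t_2$, I would rewrite
\begin{displaymath}
\lambda p_1 + (1-\lambda)p_2 \;=\; tD + \bigl[\lambda(1-t_1)x_1 + (1-\lambda)(1-t_2)x_2\bigr],
\end{displaymath}
and, when $t<1$, factor the bracket as $(1-t)\,x$ where
\begin{displaymath}
x \;=\; \frac{\lambda(1-t_1)}{1-t}\,x_1 \;+\; \frac{(1-\lambda)(1-t_2)}{1-t}\,x_2.
\end{displaymath}
The coefficients are non-negative and sum to $1$, so $x$ is a convex combination of $x_1,x_2 \in I_C$ and hence lies in $I_C$; since $t \in (0,1]$, the combination belongs to $Z_C(D)$.

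The only subtlety—and the one place a proof must actually stop and think—is the degenerate case $t=1$, where the factorisation above is $0/0$. But $t=1$ forces $\lambda(1-t_1)+(1-\lambda)(1-t_2)=0$, so every active term has $t_i=1$, and the convex combination collapses to $D$, which is in $Z_C(D)$ by construction. The same argument, applied verbatim with $I_{C,A}$ in place of $I_C$, establishes convexity of $Z_{C,A}(D)$. I do not anticipate any further obstacle; the result is essentially the convex-cone lemma with a harmless bookkeeping of endpoints.
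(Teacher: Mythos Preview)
Your proof is correct. You parametrise $Z_C(D)$ explicitly and verify convexity by a direct two-point computation, which is the classical ``cone over a convex set'' argument; the degenerate case $t=1$ is handled properly.

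The paper takes a different, more geometric route: it splits on whether $D\in I_C$, and in the nontrivial case identifies the construction of $Z_C(D)$ with one step of an incremental convex-hull algorithm (adding the vertex $D$ to the convex polygon $I_C$), concluding convexity from that. What your approach buys is uniformity (no case split) and a clean treatment of the open-endpoint convention in the definition; the paper's convex-hull identification is morally right but glosses over the fact that $Z_C(D)$, with extremes excluded, is not literally the closed convex hull of $I_C\cup\{D\}$ (it omits the ``far'' boundary of $I_C$), so strictly speaking the identification is only with a convex subset of that hull. Your algebraic argument sidesteps this imprecision entirely. Conversely, the paper's framing makes the later Theorem~\ref{th:rev:nohole} (connectedness of $S_C(D)$) more transparent, since the incremental-hull viewpoint immediately tells you which boundary segments of $I_C$ survive.
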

\begin{proof}
First of all, since both $Z_C(D)$ and $Z_{C,A}(D)$ are computed in the same way, 
the proof will be shown considering $Z_C(D)$ only, but it applies to both sets.
If $D \in I_C$, since $I_C$ is convex, all segments connecting points of $I_C$ to $D$ are internal to $I_C$, hence $Z_C(D) = I_C$, which is convex.
If $D \notin I_C$, building $Z_C(D)$ is analogous to applying a step of an Incremental Convex Hull algorithm (see for example Gift Wrapping\cite{jarvis73}, or Incremental Convex Hull\cite{Kallay84}), starting from $I_C$, which is convex and is the convex hull of its vertices, and adding the point $D$.
\end{proof}

\begin{theorem}Locus $S_C(D)$ and locus $S_{C,A}(D)$ are a connected component each.\label{th:rev:nohole}\end{theorem}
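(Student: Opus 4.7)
The plan is to prove connectedness of $S_C(D)$, and identically of $S_{C,A}(D)$, by exhibiting it as the continuous image of a connected parameter set. The essential ingredients are the convexity of $V_C$, of $I_C = V_C \cap \mathrm{AoI}$, and of $Z_C(D)$ established in the preceding theorem; since $V_{C,A}$ and $Z_{C,A}(D)$ enjoy the same properties, the argument for $S_{C,A}(D)$ will be the same verbatim and I concentrate on $S_C(D)$.

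The main step is to parametrise the situation by polar angle around $D$. Since $I_C$ is convex, the set $\Theta$ of directions $\theta$ from $D$ for which the ray at angle $\theta$ meets $I_C$ is a connected arc of the unit circle centred at $D$ (the whole circle when $D$ lies in the interior of $I_C$). For each $\theta \in \Theta$, I would define $\phi(\theta) \in \partial V_C$ to be the unique point at which the segment from $D$ to the far endpoint of the chord cut by that ray through $I_C$ first crosses a side of $V_C$. Continuity of $\phi$ follows because $V_C$ is a convex polygon and the endpoints of the chord of $I_C$ at angle $\theta$ depend continuously on $\theta$, so $\phi(\Theta)$ is a connected subset of $\partial V_C$. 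I would then verify $\phi(\Theta) = S_C(D)$ directly from the definitions: every $p \in S_C(D)$ lies on $\partial V_C$ and on a segment from $I_C$ to $D$, matching some $\phi(\theta)$; conversely, every $\phi(\theta)$ lies on a cell side and on such a segment, hence belongs to $S_C(D)$.

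A short case split handles $D \in V_C$ separately. When $D \in V_C$, convexity of $V_C$ forces every segment $\overline{q D}$ with $q \in I_C$ to remain inside $V_C$, so $Z_C(D) \subseteq V_C$ and $S_C(D) \subseteq \partial V_C \cap I_C = \partial V_C \cap \mathrm{AoI}$; this is the intersection of the simple closed convex curve $\partial V_C$ with the convex region $\mathrm{AoI}$, which is a single arc by a standard convex-geometry fact.

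The main obstacle I anticipate is handling degenerate configurations cleanly: an unbounded cell $V_C$ whose boundary is not a closed polygon, $D$ lying on a side or a vertex of $V_C$, and the extreme rays of $\Theta$ that only graze $I_C$ so that their images should be excluded by the ``extremes not included'' clause in the definition of $Z_C(D)$. In each such case the remedy is to restrict $\phi$ to the relatively open angular interval and note that removing finitely many endpoints from a continuous image of a connected set preserves connectedness. A conceptually cleaner alternative, should this book-keeping prove cumbersome, is to work with $W = V_C \cap Z_C(D)$, which is convex and whose simple closed boundary decomposes into maximal arcs lying on $\partial V_C$ and maximal arcs lying on $\partial Z_C(D)$ interior to $V_C$; because the latter arcs are all incident at the apex $D$ of the cone $Z_C(D)$, they glue into a single arc, forcing the complementary collection on $\partial V_C$, namely $S_C(D)$, to be a single arc as well.
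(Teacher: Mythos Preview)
Your angular-parametrisation strategy is sound and genuinely different from the paper's argument. The paper proceeds combinatorially: having established in the preceding theorem that $Z_C(D)$ is obtained from $I_C$ by one step of an incremental convex-hull construction (adjoining the external point $D$), it simply observes that $S_C(D)$ coincides with the chain of edges of $I_C$ that are absorbed in that step, plus the two extreme vertices that get joined to $D$; adjacent edges form a connected polyline, and when $D\in I_C$ it declares $S_C(D)=\emptyset$. Your approach instead realises $S_C(D)$ as the continuous image $\phi(\Theta)$ of an arc of directions, which is more topological and would survive generalisation beyond piecewise-linear cells, at the cost of the continuity and degeneracy book-keeping you anticipate.

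There are, however, two genuine gaps. In your $D\in V_C$ branch you only obtain $S_C(D)\subseteq \partial V_C\cap\mathrm{AoI}$ and then conclude connectedness because the right-hand side is a single arc; but a subset of a connected set need not be connected, so you must either upgrade this to the equality $S_C(D)=\partial V_C\cap\mathrm{AoI}$ (which does follow once $Z_C(D)=I_C$), or adopt the paper's line that $S_C(D)$ is empty here. Second, your alternative via $W=V_C\cap Z_C(D)$ asserts that the $\partial Z_C(D)$-arcs of $\partial W$ are ``all incident at the apex $D$''; in the interesting case $D\notin V_C$ the apex lies outside $W$, so those arcs do \emph{not} share $D$ and the gluing step fails as written. (The cautionary example of a thin convex slab crossing a disk in two arcs shows that $\partial V_C\cap Z_C(D)$ is not automatically connected from convexity alone; it is the specific cone-over-$I_C$ structure, captured by your $\phi$, that forces a single arc.) Finally, the claim that ``removing finitely many endpoints from a continuous image of a connected set preserves connectedness'' is false in general; what rescues you is that a connected subset of the simple curve $\partial V_C$ is an arc, and deleting endpoints of an arc leaves an arc.
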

\begin{proof}
If $D \in I_C$, $S_C(D)$ is the empty set. Let us consider that $D \notin I_C$.
From the computation of $Z_C(D)$ using an Incremental Convex Hull algorithm \cite{jarvis73, Kallay84}, the $S_C(D)$ is constituted by the segments linking the vertices of $I_C$ that are not vertices of $Z_C(D)$, plus the two vertices of $I_C$ that were linked to $D$. Thus, $S_C(D)$ is a succession of adjacent segments, thus $S_C(D)$ is a connected component.
The proof regarding $S_{C,A}(D)$ is analogous.
\end{proof}

\begin{corollary}
Considering sites $A$, $B$ and $C$ that are mutual neighbors of each others, it holds
 that $B \in N_C(D) \bigwedge A \in N_C(D)$ {\bf if and only if}
 the common vertex of $V_A$, $V_B$ and $V_C$ lays into $S_C(D)$.
\end{corollary}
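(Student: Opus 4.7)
The plan is to split the biconditional into its two directions, with Theorem~\ref{th:rev:nohole} (asserting that $S_C(D)$ is a single connected component) doing the heavy lifting in the non-trivial direction. The geometric input common to both directions is that, since $A$, $B$, and $C$ are mutual neighbors, the sides $l_{C,A}$ and $l_{C,B}$ of $V_C$ are adjacent along $\partial V_C$ and share only the vertex $v_{A,B,C}$.

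The direction $v_{A,B,C} \in S_C(D) \Rightarrow A, B \in N_C(D)$ is essentially an unwinding of definitions. Because $v_{A,B,C}$ lies on both $l_{C,A}$ and $l_{C,B}$, the hypothesis implies that $l_{C,A} \cap S_C(D)$ and $l_{C,B} \cap S_C(D)$ are both non-empty, and the defining property of $N_C(D)$ places both $A$ and $B$ in it.

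For the converse, assume $A, B \in N_C(D)$ and pick $p_A \in l_{C,A} \cap S_C(D)$ and $p_B \in l_{C,B} \cap S_C(D)$. By Theorem~\ref{th:rev:nohole}, $p_A$ and $p_B$ lie in a single connected arc of $\partial V_C$ entirely contained in $S_C(D)$. Since $l_{C,A}$ and $l_{C,B}$ intersect only at $v_{A,B,C}$, any connected sub-arc of $\partial V_C$ that reaches from one side to the other, without traversing the remainder of the polygon, must cross through $v_{A,B,C}$, giving $v_{A,B,C} \in S_C(D)$.

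The main obstacle I anticipate is ruling out the pathological alternative in which the connected arc of $S_C(D)$ joining $p_A$ to $p_B$ goes the long way around $\partial V_C$ and bypasses $v_{A,B,C}$. I would rule this out using the explicit construction of $Z_C(D)$ as a union of open segments from points of the convex region $I_C$ towards $D$: $S_C(D)$ then arises essentially as the central projection of $I_C$ onto $\partial V_C$ from $D$'s viewpoint, so it is confined to the arc of $\partial V_C$ visible from $D$ and cannot wrap all the way around the cell. Once $S_C(D)$ is pinned to a single visible arc, its intersections with two adjacent sides of $V_C$ automatically force it to contain their shared vertex. Turning this visibility-and-projection intuition into a clean case analysis on the position of $D$ relative to $V_C$ is the technical crux.
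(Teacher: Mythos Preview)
Your approach is exactly the one the paper has in mind: the corollary is stated immediately after Theorem~\ref{th:rev:nohole} with no proof, so the authors regard it as a direct consequence of the connectedness of $S_C(D)$, which is precisely the lever you pull. Your forward direction is the trivial unwinding the authors leave implicit, and your reverse direction---pick witnesses on the two adjacent sides and use connectedness to force the arc through $v_{A,B,C}$---is the intended reading of ``corollary.''

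Where you go beyond the paper is in flagging the ``long way around'' pathology. The paper does not address this at all, but its proof of Theorem~\ref{th:rev:nohole} already contains the ingredient you need: it describes $S_C(D)$ explicitly as the chain of edges of $I_C$ discarded by one step of an incremental convex-hull construction when $D$ is adjoined. That chain is precisely the portion of $\partial I_C$ visible from $D$, so it is a proper sub-arc of the boundary and cannot wrap around past an invisible vertex. This is exactly your visibility-and-projection intuition; you can cite the convex-hull description from the proof of Theorem~\ref{th:rev:nohole} directly rather than rebuilding it, which dissolves the ``technical crux'' you anticipated into a one-line reference.
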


\begin{corollary}
Consider now local visions. The common vertex of $v_{A,B,C}$ three sites A, B and C that
 are mutual neighbors,
 is computed in the same way by the three sites. Thus,
 since the knowledge about the AoI is global, all $A$, $B$ and $C$ agree
 on the belonging of the common vertex to the segments of interest  $S_{C}(D)$, $S_{C,A}(D)$ and $S_{C,B}(D)$.
\end{corollary}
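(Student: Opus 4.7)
The corollary makes two claims, linked by a ``thus'': (C1) the common vertex $v_{A,B,C}$ is computed identically by the three mutually-neighboring sites $A$, $B$, $C$; and (C2) because the AoI is global, the three sites reach the same verdict on whether $v_{A,B,C}$ belongs to $S_C(D)$, $S_{C,A}(D)$, and $S_{C,B}(D)$ respectively. My plan is to prove (C1) by a direct geometric observation and then derive (C2) from (C1) combined with a locality argument grounded in Theorem~\ref{th:rev:subset}.

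For (C1), I would observe that the Voronoi vertex shared by $V_A$, $V_B$, and $V_C$ is the unique point of the plane equidistant from the three sites --- that is, the circumcenter of the triangle $ABC$. Its coordinates depend solely on the positions of $A$, $B$, and $C$, all of which are known to each of the three sites (either as self or as an immediate Voronoi neighbor). The same linear system is therefore solved in each local vision, yielding identical coordinates.

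For (C2), I would note that $v_{A,B,C}$ is a vertex not only of $V_C$ but also of the (possibly enlarged) cells $V_{C,A}$ and $V_{C,B}$ from Theorem~\ref{th:rev:subset}: in all three visions, the two sides of $C$'s cell meeting at $v_{A,B,C}$ lie on the perpendicular bisectors of $CA$ and $CB$, which depend only on the mutually-known coordinates of $A$, $B$, $C$. Consequently, in a sufficiently small neighborhood of $v_{A,B,C}$, the three cells agree, and since the AoI is globally known, so do $I_C$, $I_{C,A}$, and $I_{C,B}$ within that neighborhood. By convexity of these intersections, whether $v_{A,B,C}$ lies on a segment from one of them to $D$ is determined entirely by whether the ray from $D$ through $v_{A,B,C}$ continues into the shared local tangent cone at the vertex --- a test that depends only on $A$, $B$, $C$, $D$, and the AoI, all of which are common knowledge to the three sites. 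Hence all three reach the same conclusion.

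The main obstacle will be making the locality argument airtight: one must rule out that some point $p \in I_{C,A} \setminus I_C$, lying far from $v_{A,B,C}$, produces a witnessing segment through $v_{A,B,C}$ to $D$ that has no counterpart in $I_C$. This is handled by the convexity of $I_{C,A}$ together with the fact that any such segment must, on its way from $p$ to $D$, pass arbitrarily close to $v_{A,B,C}$ through the local neighborhood where $I_C$, $I_{C,A}$, and $I_{C,B}$ coincide; the existence of the witness is thus equivalent across the three visions, completing the proof.
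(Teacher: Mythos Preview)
The paper states this corollary without proof, treating both claims as immediate observations: the vertex is the circumcenter of $A,B,C$ (hence computed identically from shared data), and the AoI is globally known, so ``agreement'' follows. Your argument for (C1) matches this exactly.

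For (C2) you go well beyond the paper, and rightly so. The paper's one-line ``thus'' glosses over precisely the difficulty you flag: since $I_{C,A}\supsetneq I_C$ is possible (Theorem~\ref{th:rev:subset}), a witness $p\in I_{C,A}\setminus I_C$ might in principle place $v_{A,B,C}$ in $S_{C,A}(D)$ without a matching witness in $I_C$. Your locality argument is the correct resolution and is not something the paper makes explicit. The key fact you use---that $V_C$, $V_{C,A}$ and $V_{C,B}$ coincide in a neighborhood of $v_{A,B,C}$---holds because every site known to $A$ (or $B$) is a site of the global diagram and hence strictly farther from $v_{A,B,C}$ than $A,B,C$; so near the vertex only the bisectors of $CA$ and $CB$ are active in all three visions.

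Your final paragraph is correct but would read more cleanly if you made the witness replacement explicit: given $p\in I_{C,A}$ with $v_{A,B,C}$ on the open segment $(p,D)$, convexity of the AoI forces $v_{A,B,C}\in\mathrm{AoI}$, hence $v_{A,B,C}\in\overline{I_{C,A}}$; then any point $p'$ on $[p,v_{A,B,C}]$ sufficiently close to $v_{A,B,C}$ lies in $I_{C,A}$ by convexity and, by the local coincidence, in $I_C$ as well, so $p'$ witnesses $v_{A,B,C}\in S_C(D)$. The reverse implication is immediate from $I_C\subseteq I_{C,A}$. With that tightening, your proof is complete and strictly more rigorous than what the paper offers.
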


\begin{corollary}\label{co:agree}
As a consequence of the previous corollary, $A$, $B$ and $C$ agree on
\begin{itemize}
\item
$C$ sending a packet to $A$ -- or not -- to get to $D$ with MABRAVO$_D$ algorithm, and
\item
$A$ sending a packet to $C$ -- or not -- for a AoI-cast generated in $D$ with MABRAVO$_R$ algorithm.
\end{itemize}
\end{corollary}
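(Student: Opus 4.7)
The plan is to reduce the claimed agreement between $A$, $B$, and $C$ to the agreement on a short list of primitive predicates, all of which depend only on information that is globally agreed. Concretely, I would first inspect Algorithm~\ref{alg:impl:direct} (for the $C$-to-$A$ direction of MABRAVO$_D$) and Algorithm~\ref{alg:impl:inv} (for the $A$-to-$C$ direction of MABRAVO$_R$) and enumerate every test each of the three sites must evaluate when deciding whether the edge $AC$ is used: the distance comparisons of the form $\|Ds_i\|<\|Ds_j\|$, the angular comparisons $\angle D s_i s_j$, the ID comparisons used to break ties, and the ``vertex-in-AoI'' tests for the common Voronoi vertices $v_{A,B,C}$ and the third shared vertex on the side $l_{A,C}$.

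Next I would argue that each of these primitive predicates is site-independent. The distance and angle comparisons involve only the coordinates of $A$, $B$, $C$, $D$, all of which are known to each of the three sites since they are pairwise neighbors; the ID comparisons are on the global IDs introduced in Subsection~\ref{subsec:req}; and the AoI itself is global data. The only non-trivial ingredient is the location and classification of the common vertices: by the previous corollary, $A$, $B$, and $C$ compute the same point $v_{A,B,C}$ and agree on whether this point lies in the segments of interest $S_C(D)$, $S_{C,A}(D)$, $S_{C,B}(D)$ (and symmetrically for the analogous sets centred on $A$ and $B$). Hence every branch taken inside Algorithm~\ref{alg:impl:direct} and Algorithm~\ref{alg:impl:inv}, when restricted to the decision concerning the pair $(A,C)$, is taken on identical inputs in the local vision of each of the three sites.

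I would then combine these observations. For MABRAVO$_D$: the choice by $C$ of forwarding to $A$ reduces to (i) $A\in N_C(D)$, which by the previous corollary is equivalent to the common vertex lying in the segments of interest (and the symmetric check on the other endpoint of $l_{A,C}$), and (ii) $A$ minimising the angle $\angle D C s_i$ over candidate neighbours, with IDs breaking ties; both $A$ and $B$ can re-run exactly the same test with the same data and reach the same conclusion. For MABRAVO$_R$: site $A$ decides to forward to $C$ by a dual check that asks, for each common neighbour whose shared vertex with $A$ and $C$ is in the AoI, whether that neighbour would beat $A$ in the MABRAVO$_D$ angle/ID comparison at $C$; by the same reasoning this is decidable identically by $A$, $B$, and $C$.

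The main obstacle will be the bookkeeping for the tie-breaking and for the case where the side $l_{A,C}$ has two shared vertices, only one of which lies in the AoI: one must check that the corollary's ``agreement on membership'' really applies to both endpoints and that the algorithm's ``unless'' clauses line up symmetrically between Algorithm~\ref{alg:impl:direct} and Algorithm~\ref{alg:impl:inv}. Once this case analysis is laid out, the conclusion that $C$-sends-to-$A$ (under MABRAVO$_D$) and $A$-sends-to-$C$ (under MABRAVO$_R$) are decided by identical Boolean combinations of commonly-agreed predicates, and hence all three sites agree, follows immediately.
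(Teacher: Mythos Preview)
The paper itself offers no proof here: the corollary is asserted as an immediate consequence of the preceding one, so there is nothing on the paper's side to compare against beyond ``it follows directly.'' Your plan to reduce the two routing decisions to a finite list of primitive predicates (distances, angles, IDs, vertex-in-AoI tests) that all three sites can evaluate identically is exactly the right way to substantiate the claim, and for the ingredients centred on the single shared vertex $v_{A,B,C}$ it works: the previous corollary gives agreement on membership of $v_{A,B,C}$ in the segments of interest, and the coordinates of $A$, $B$, $C$, $D$ together with the global IDs and the global AoI settle the remaining comparisons among these three sites.

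There is, however, a real gap hiding behind what you call ``the main obstacle.'' The second endpoint of the side $l_{A,C}$ is a vertex $v_{A,C,E}$ where $E$ is the \emph{other} common neighbour of $A$ and $C$; nothing forces $E$ to be a neighbour of $B$, so in general $B$ does not know $E$'s coordinates and cannot compute $v_{A,C,E}$, test whether it lies in the AoI, or evaluate the angle $\angle D\,C\,E$. Consequently $B$ cannot re-run, on identical inputs, the full MABRAVO$_R$ test that $A$ performs on the pair $(A,C)$ (Algorithm~\ref{alg:impl:inv} explicitly branches on both $s_k$ and $s_l$), nor the full MABRAVO$_D$ minimisation that $C$ performs over all of $N_C(D)$. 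So the literal statement ``$B$ reaches the same conclusion as $A$ (resp.\ $C$) about the entire forwarding decision'' is stronger than what your argument, or the previous corollary, actually delivers.

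What the paper in fact \emph{uses} later (in the existence proof for MABRAVO$_R$) is only the pairwise comparison at the shared vertex: $A$, $B$, $C$ agree on whether $C\in N_{A,B}(D)$ and on whether $C$ beats $B$ in the angle/ID test at $A$. That weaker claim is exactly what the previous corollary yields and is all your reduction needs to establish; you should restrict the conclusion to this pairwise agreement rather than to the full forwarding decision, and then the argument goes through without the unresolved second-endpoint case.
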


\begin{theorem}Existence for MABRAVO$_R$ routes (each site in the AoI receives the packet at least once).\end{theorem}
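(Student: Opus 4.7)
The plan is to prove existence by backward induction along MABRAVO$_D$ routes, using Corollary~\ref{co:agree} to lift decisions between the two algorithms. Let $s_D$ denote the site whose Voronoi cell contains the AoI-cast origin $D$; by hypothesis $s_D$ is in the AoI and initially holds the packet. Pick an arbitrary site $s_X$ in the AoI with $s_X \neq s_D$. By Theorems~\ref{th:uni:exi} and~\ref{th:uni:uni} the MABRAVO$_D$ route from $s_X$ toward $D$ exists, is unique, and is finite: $s_X = s^{(0)} \to s^{(1)} \to \cdots \to s^{(k)} = s_D$, with $D\, s^{(j+1)} < D\, s^{(j)}$ at every step by Theorem~\ref{theo:distances}.

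I would then prove by downward induction on $j = k, k-1, \ldots, 0$ that $s^{(j)}$ receives the packet. The base $j = k$ is immediate. For the inductive step, assume $s^{(j+1)}$ has the packet and runs Algorithm~\ref{alg:impl:inv}; the goal is to verify that when its main loop reaches the neighbour $s_j = s^{(j)}$, the final ``Send packet to $s_j$'' line is executed. Theorem~\ref{theo:distances} clears the first distance guard. The remaining analysis mirrors the branch taken by MABRAVO$_D$ at $s^{(j)}$: if MABRAVO$_D$ fell into the $L = \emptyset$ branch, then both common vertices of $s^{(j)}$ and $s^{(j+1)}$ lie outside the AoI but the border $l_{j,j+1}$ crosses an AoI side, so the first conditional of MABRAVO$_R$'s loop body detects this crossing and fires its ``Send'' action; otherwise $s^{(j+1)}$ was the minimum-angle winner (with ID tiebreak) of the set $L$ used at $s^{(j)}$, and I would show that for every common neighbour $s_k$ of $s^{(j)}$ and $s^{(j+1)}$ the guard checked by MABRAVO$_R$ against $s_k$ cannot fire: whatever reason made MABRAVO$_D$ at $s^{(j)}$ prefer $s^{(j+1)}$ over $s_k$ (strict angular inequality or the ID tiebreak) is precisely the negation of MABRAVO$_R$'s skip predicate.

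The hard part, and the main obstacle, is the last claim: I have to verify that the local-vision quantities computed at $s^{(j+1)}$ inside MABRAVO$_R$ reproduce those used by $s^{(j)}$ inside MABRAVO$_D$. I would discharge this by invoking Corollary~\ref{co:agree}, so that $s^{(j)}$, $s^{(j+1)}$ and $s_k$ agree on the common vertex $v_{j,j+1,k}$, on its AoI membership, and consequently on both the forward MABRAVO$_D$ decision and its inverse MABRAVO$_R$ decision; together with Theorem~\ref{th:rev:subset}, which ensures the overlapping local visions are consistent, this aligns the two algorithms branch by branch. Once the bookkeeping is carried out for both potential common neighbours $s_k$ and $s_l$, the inductive step closes, and iterating it all the way down to $s^{(0)} = s_X$ shows that $s_X$ receives the packet, establishing the theorem.
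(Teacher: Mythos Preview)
Your approach is sound but differs markedly from the paper's. You run a backward induction along the unique MABRAVO$_D$ path $s_X \to \cdots \to s_D$ and argue, step by step, that the forward successor $s^{(j+1)}$ must forward to $s^{(j)}$ under MABRAVO$_R$ because the two algorithms are branch-by-branch inverses. The paper instead never traces a unicast route: it fixes an arbitrary $A$ in the AoI and an arbitrary neighbour $B$ that \emph{refuses} to forward to $A$, and then walks a chain of common neighbours $B, C, F, \ldots, G$ around $A$, each with strictly smaller angle $\angle D A (\cdot)$, until it reaches one that must accept; finiteness of $A$'s neighbour set terminates the chain. Your route-based induction is cleaner and yields the stronger statement that the MABRAVO$_R$ tree is exactly the set of reversed MABRAVO$_D$ paths, while the paper's chain argument makes explicit the ``if not this neighbour, then the next one'' self-correction without ever committing to which neighbour actually sends. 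One piece of bookkeeping your sketch elides: the claim that ``whatever reason made MABRAVO$_D$ at $s^{(j)}$ prefer $s^{(j+1)}$ over $s_k$ negates the skip predicate'' tacitly assumes $s_k$ entered the candidate set $L$ at $s^{(j)}$; you should also dispose of the two ways $s_k$ can be pruned before the angular comparison (the distance filter and the both-vertices-outside-AoI filter). The vertex case is immediate since then $v_{i,j,k}\notin\mathrm{AoI}$ and the guard cannot fire; the distance case needs a sentence too, and Corollary~\ref{co:agree} by itself does not supply it.
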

\begin{proof}
Let us consider that $D$ generates a MABRAVO$_R$ AoI-cast,
that $A$ must receive the packet because its Voronoi cell owns points included in the AoI, and that $B$ decides not to forward a packet to $A$. Let us prove that there will be another site forwarding the packet to $A$.

Site $B$ can take the decision not to forward the packet
 to $A$ for two motivations:
\begin{itemize}
\item
Site $B \notin N_{A,B}(D)$, which is the set of the neighbors of $A$ that are towards $D$ in the local vision of $B$.
Since $I_{A,B} \supseteq I_A$ (see Theorem~\ref{th:rev:subset}), $B$ can not be in $N_A(D)$, and since the unicast route from $A$ to $D$ exists (see Theorem~\ref{th:uni:exi}), there must be at least another site
that will forward the packet to $A$ during MABRAVO$_R$ routing.
\item
Function $\angle D s_i B$ has not a minimum in $s_i = A$.
The cause is that one common neighbor of $A$ and $B$ (let us call it $C$) is into $N_{A,B}(D)$ and it has a smaller angle. Since $A$, $B$ and $C$ agree if $C \in N_{A,B}(D)$ (see Corollary~\ref{co:agree}), $C$ will either select itself to send the packet to $A$, or it will repeat the same reasoning for a common neighbor (let us call it $F$) of $C$ and $A$, but on the other side with respect to $B$. Since the number of neighbor of $A$ is finite, this chain will end up on a site (let call it $G$) that will actually send the packet to $A$.
\end{itemize}

Thus, if $B$ decides not to forward a packet to $A$, there will be at least another site that will forward the packet to $A$, and hence there exists at least one MABRAVO$_R$ route reaching $A$.
\end{proof}
\begin{figure}[!t]
\begin{center}
\includegraphics[width=0.90\linewidth]{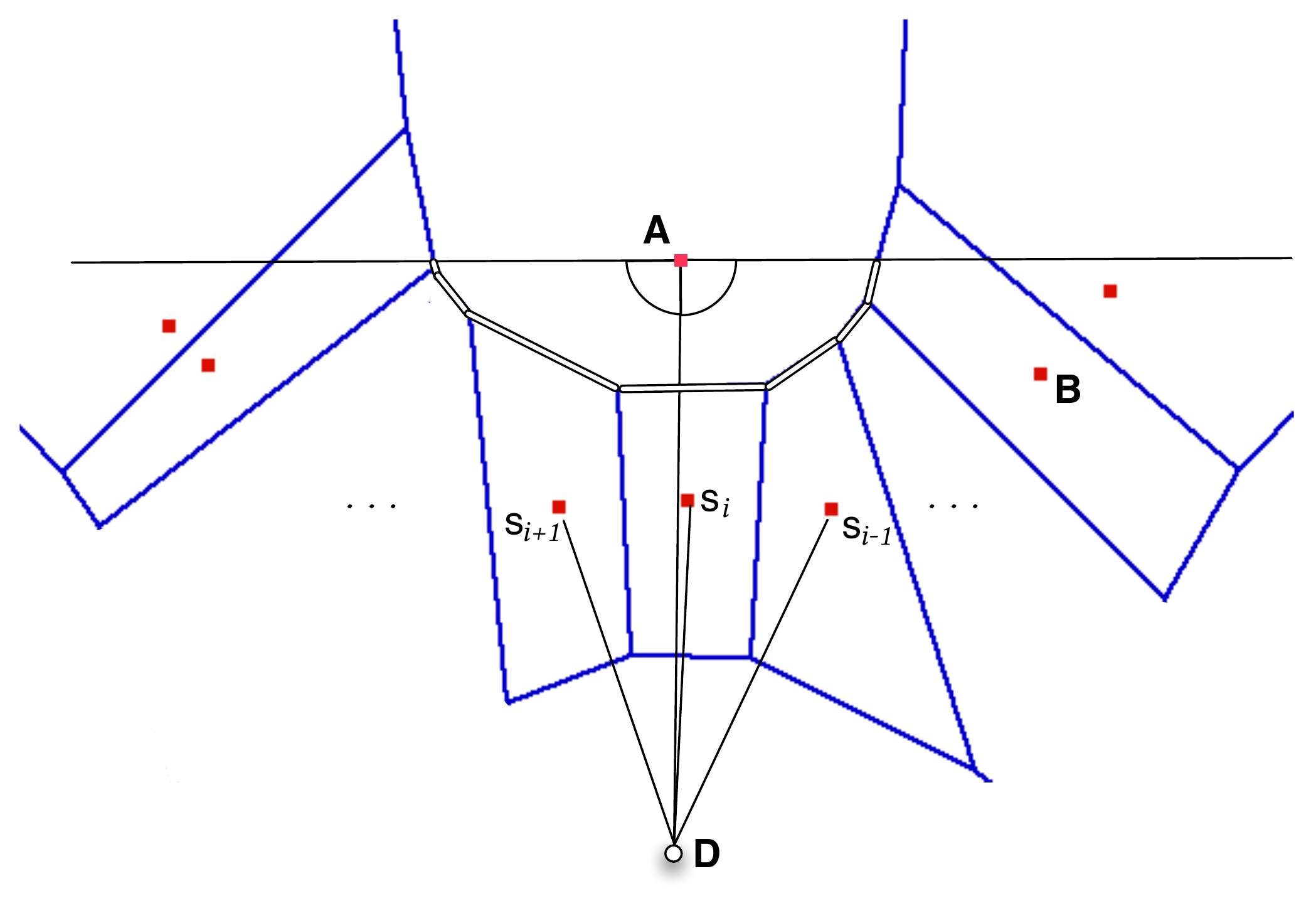}
\caption{Unicity of MABRAVO$_R$ AoI-cast routes.}\label{fig:Unicity}
\end{center}
\end{figure}
\begin{theorem}Unicity for MABRAVO$_R$ route (each site in the AoI receives the packet at most once).\label{th:rev:uni}\end{theorem}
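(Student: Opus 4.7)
The plan is to prove uniqueness by contradiction, exploiting the fact that MABRAVO$_R$ is designed as a mirror of MABRAVO$_D$: site $B$ forwards to neighbor $A$ in MABRAVO$_R$ precisely when $B$ would be the unique next hop chosen by $A$ while running MABRAVO$_D$ toward $D$. Since Theorem \ref{th:uni:uni} guarantees that MABRAVO$_D$ selects exactly one next hop at each site, at most one neighbor $B$ of $A$ can qualify, and hence $A$ can receive at most one MABRAVO$_R$ forwarding.

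To carry this out, I would suppose for contradiction that a site $A$ in the AoI receives the packet from two distinct neighbors $B_1$ and $B_2$ during an AoI-cast initiated at $D$. Inspecting Algorithm \ref{alg:impl:inv}, the test performed by a forwarder $B$ against neighbor $A$ decomposes into: (i) a monotonicity check $Ds_i < Ds_j$ ensuring progress away from $D$; (ii) a geometric selection based on whether the vertices $v_{i,j,k}$ and $v_{i,j,l}$ shared with common neighbors lie inside or outside the AoI, which exactly reproduces the construction of $N_A(D)$ used inside MABRAVO$_D$; and (iii) the angle-plus-ID minimization that reproduces the final tie-breaking rule of Algorithm \ref{alg:impl:direct}. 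Corollary \ref{co:agree} is the crucial ingredient guaranteeing that $B_1$, $B_2$, and any common neighbor of $A$ agree on the status of shared vertices with respect to the AoI and on the membership of shared sides in $S_A(D)$, so the decisions made by $B_1$ and $B_2$ coincide with the unique selection that $A$ itself would perform. The desired contradiction then follows directly from Theorem \ref{th:uni:uni}.

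The main obstacle, illustrated by Fig.~\ref{fig:Unicity}, is verifying point (iii): one must show that the pairwise angle comparisons that $B$ performs against only its (up to two) common neighbors $s_k$ and $s_l$ with $A$ suffice to certify that $B$ truly minimizes $\angle D A s_i$ over the whole set $N_A(D)$. Here the connectedness of $S_A(D)$ established in Theorem \ref{th:rev:nohole} is essential, because it forces the sites of $N_A(D)$ to appear as a contiguous arc in the circular order around $A$; hence any non-minimizing $B$ must be beaten by an immediate common neighbor on one side, which is exactly what Algorithm \ref{alg:impl:inv} inspects. Finally, the globally consistent ID-based tie-breaker eliminates the only remaining source of ambiguity, ruling out the possibility that two neighbors of $A$ could simultaneously satisfy all the MABRAVO$_R$ conditions and closing the proof.
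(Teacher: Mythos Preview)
Your proposal is correct and follows essentially the same route as the paper: both arguments hinge on Theorem~\ref{th:rev:nohole} (connectedness of $S_A(D)$) to conclude that the candidates in $N_A(D)$ form a contiguous arc around $A$, so that any non-minimizer of $\angle D A s_i$ is already beaten by one of its (at most two) common neighbors with $A$, which is exactly what Algorithm~\ref{alg:impl:inv} checks. The only cosmetic difference is that you package the conclusion as a reduction to the uniqueness of the MABRAVO$_D$ next hop (Theorem~\ref{th:uni:uni}) and explicitly invoke Corollary~\ref{co:agree}, whereas the paper phrases the same chain-of-neighbors argument geometrically via the $A$--$D$ bisector; the underlying content is identical.
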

\begin{proof}
Let us prove that a generic site $A$ can not receive a packet
 from more than one MABRAVO$_R$ route.
First of all, as a consequence of Theorem~\ref{theo:distances}, we have to consider that any site that will forward a message to $A$ should lay in the half of the plane that is closer to $D$ than $A$ itself.
Let us now suppose that site $B$ decides to
 send a packet to site $A$ during a MABRAVO$_R$ routing
 originated in point $D$. In the following, we use Fig.~\ref{fig:Unicity} as a possible representation of the situation.
A necessary condition is that $B \in N_{A,B}(D)$, and thus $B \in N_A(D)$.
Moreover, $B$ should see that $\angle D A B$ is smaller that the angle formed by any of its neighbors in $N_A(D)$. This fact implies that $B$ is the site in $N_A(D)$ that is closer to the half-plane $A-D$ bisector line, that connects $D$ and $A$. Otherwise, since Theorem \ref{th:rev:nohole} states that all the sites in $S_A(D)$ form a connected locus on the plane, there exists a series of other sites $s_1, ... , s_i$ that starts from a $B$'s neighbor $s_1$, and where a site $s_k$ is neighbor of $s_{k+1}$. These sites should be closer to the bisector line than $B$, as shown in Fig.~\ref{fig:Unicity}. However, this also implies that each of these sites forms an angle with $D$ and $A$ that is lower than $\angle D A B$. Therefore, $B$ cannot consider itself as the best candidate to forward a message to $A$, since it determines that at least one of its neighbors is a better candidate. The same decision would be taken by all the sites $s_1, ..., s_{i-1}$. This leads to the conclusion that only one site ($s_i$ in the example)  considers itself as the site in charge to deliver a message to $A$, thus demonstrating the unicity of the selection of a message forwarder in MABRAVO$_R$.
\end{proof}

\x

\subsection{Considerations about the complexity of the algorithms}\label{sec:complexity}

This section discusses briefly the complexity of the MABRAVO routing algorithms, both in terms of messages, and of computational complexity.

As proven in the previous subsections, the MABRAVO protocol suite allows for correct unicast and AoI-cast routing if the sites have up-to-date information regarding their own locations, and the location of their neighbors in the Voronoi diagram. As discussed in Section~\ref{subsec:contribution}, we consider that a topology maintenance algorithm is already in place in the network, since several solutions are already available in the literature~\cite{Alsalih08,pietrabissa2016distributed,beaumont2007voronet}. For example, the VoroNet~\cite{beaumont2007voronet} topology maintenance algorithm has a message complexity for each site that is proportional to the number of its neighbors. 
Since the expected number of neighbors of a site is lower than $6$~\cite{Aurenhammer}, the amortized message complexity for each site to maintain the topology is $O(1)$.

The message complexity for the MABRAVO$_R$ AoI-cast algorithm was proven to be optimal in Section~\ref{subsec:theomultproof}. The algorithm is able to create a routing tree over the sites in the AoI in a distributed manner, and the total number of messages is equal to the number of sites in the AoI minus $1$.

In the rest of this section, let us call $n$ the number of neighbors of a site $s_i$, and $m$ the number of sides defining the AoI. The computational complexity of both the MABRAVO$_D$ algorithm (Algorithm~\ref{alg:impl:direct}) and the MABRAVO$_R$ algorithm (Algorithm~\ref{alg:impl:inv}) depends (i) on the expected number of neighbors of each site $s_i$ being less than $6$~\cite{Aurenhammer}, thus $O(1)$; (ii) on the operations discussed in Section~\ref{subsec:req} being able to access the set of neighbors of each site $s_i$ in linear time in the number of neighbors; (iii) on the fact that the number of sides of the AoI is an external parameter set by the user defining the AoI, 
and in most applications this value can be considered sufficiently small. For instance, in many location-based applications, areas/regions of interest are defined as rectangles (e.g.~\cite{ko2000location,6714420,seada2006efficient}).

With regards to the MABRAVO$_D$ algorithm (Algorithm~\ref{alg:impl:direct}), its first loop of the algorithm is repeated for each neighbor of a site $s_i$ (thus $O(n)$ times), and each time it accesses the list of neighbors of $s_i$ (complexity $O(n)$) and it compares the location of each neighbor with each AoI side, whose cardinality is $O(m)$. Thus, the complexity of the first loop is $O(n^2 m)$. The second loop is executed when no vertices of the Voronoi cell of $s_i$ are into the AoI, it is repeated for each neighbor of $s_i$ ($O(n)$ times), its internal loop is repeated for each side of the AoI ($O(m)$ times), thus the complexity of the second loop is $O(n m)$. The third loop is repeated over the neighbors of $s_i$, which are $O(n)$, it performs only operations with constant complexity, thus the complexity of the third loop is $O(n)$. Thus, the computational complexity of the MABRAVO$_D$ algorithm is $O(n^2 m + n m + n) = O(n^2 m)$. 
It is worth noticing that, in the average case, $n$ has an expected value that is equal or less than the constant $6$. Therefore, in the average case the complexity reduces to $O(m)$. In addition, as observed in the previous paragraph, $m$ has generally a small value, thus leading to an overall low complexity.

With regards of the MABRAVO$_R$ algorithm (Algorithm~\ref{alg:impl:inv}), its most external loop is repeated for each neighbor of $s_i$, thus $O(n)$ times. The first condition in the algorithm (the neighbors being located outside the AoI) requires to repeat basic geometric operations for each side of the AoI (thus, $O(m)$ times), then it extracts $v_{i,j,k}$ and $v_{i,j,l}$ (cost $O(n)$), and then executes a loop on the sides of the AoI (thus, $O(m)$ times), each time performing operations having constant execution time. If at least one of the neighbors of $s_i$ is located into the AoI, whose test costs $O(m)$, the algorithm executes in the worst case the two \texttt{if} clauses, which require to perform basic geometric operations. Thus, the computational complexity of the MABRAVO$_R$ algorithm is $O(n (m (n+m)+m) ) = O(n^2m + n m^2)$.
Applying the same reasoning used for MABRAVO$_D$, in the average case the value of $n$ is equal or less than $6$. The overall complexity is thus a function of $m$ only (i.e.: $O(m^2)$), where $m$ is generally a small value.

\y

\section{Evaluation of the MABRAVO suite}\label{sec:sims}

This section describes the experimental evaluation of the algorithms of the MABRAVO suite. The evaluation is made through a simulation implementation of the proposed solution. The results shown in the rest of this section have been selected in order to better highlight the features of MABRAVO and to allow to experimentally corroborate the correctness of the algorithms. 

\subsection{Implementation}

In this section, we provide a description of the simulator we used to derive the results presented in the rest of the section. This description makes it possible to use the related software. Thus, it allows to make the results we present verifiable and fully reproducible by the scientific community. The simulator is available on github (https://github.com/michelealbano/mabravo) and it was \x published \y on Code Ocean (DOI: 10.24433/CO.1722184.v1).

The algorithms were implemented using the Java programming language and are accessible as a supplementary material of this paper. This subsection describes the code, shows how to compile it, and what it does when executed.

The implementation makes use of the VAST library, 
a well-known library used in the literature to help evaluate Voronoi-based solutions (e.g., the proof-of-concept in~ \cite{VON}).
The novel code comprises 4 classes:
\begin{itemize}
    \item {\bf AreaOfInterest} maintains a convex AoI on the plane. When instantiated, it receives a number of points in the plane, and it makes use of the Gift Wrapping\cite{jarvis73} algorithm to organize them as a clockwise sequence of points that define the AoI;
    \item {\bf VoronoiArea} is a thin wrapper over the mechanisms provided by the VON codebase;
    \item {\bf VoronoiNetwork} implements all the routing algorithms of the MABRAVO suite, and a breadth first visit that is used to compute e.g. the number of sites that lie into an AoI;
    \item {\bf Mabravo} contains all the parameters that are used to specify the simulations to be performed, it drives the execution of the experiments, it can provide a simple graphical representation of the routing processes, and it can compute performance parameters to summarize the results of routing processes.
\end{itemize}
To deploy the system, it is sufficient to issue a \texttt{make} command on the command line. After that, the software can be executed in two modes.

\begin{figure*}[!t]
\centering
\centering
\includegraphics[width=2.75in]{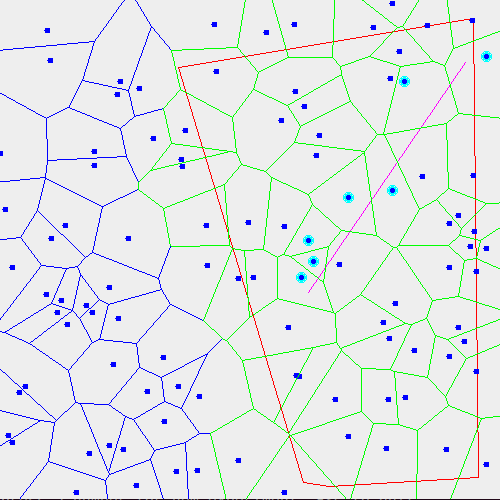}
\hfil
\centering
\includegraphics[width=2.75in]{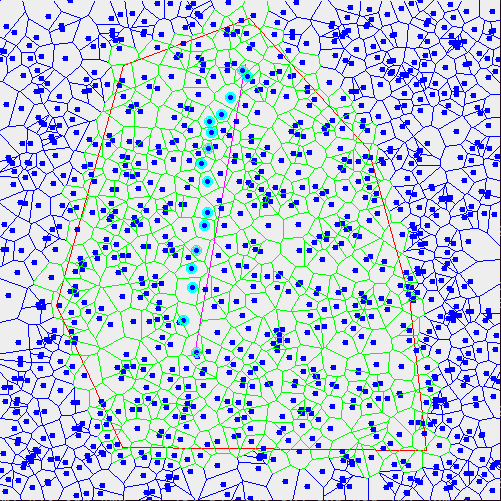}
\caption{Routing process on a network comprising 100 (left) and 1000 (right) sites.}
\label{fig:simroutes}
\end{figure*}

The first one is named the {\bf graphical} mode. It is executed if the user provides $3$ parameters on the command line. An example of the invocation of this mode of execution is \texttt{java -cp mabravo-1.1.0.jar mabravo.Mabravo 100 10 1000}. 
These parameters are: the number of sites; the number of points defining the AoI; a random seed.
This execution mode allows the user to have a visual representation of the system and of the execution of the MABRAVO protocol suite.
Specifically, the Mabravo application creates a number of sites coordinates at random in the plane, and it instantiates an AoI with a given number of vertices. After that, it selects two points at random in the AoI and performs a routing process from the first to the second. Finally, a graphical representation of the process is provided to the user. If the user presses the \texttt{return} key, the process will start once again with new random coordinates.

The second execution mode  is the {\bf batch} mode. It does not provide any graphical representation, but it is designed to allow to perform a series of different simulations of the system, and to extract performance indicators. To perform the simulations, five parameters are passed from the command line when invoking the main class
(e.g.: \texttt{java -cp mabravo-1.1.0.jar mabravo.Mabravo 100 10 100 10 100}).
These parameters are: the number of sites; the number of points defining the AoI; the number of routing processes to be performed over each network; the number of networks to be simulated; a random seed.

For each of the routing processes, the system print out data to both evaluate the proposed solution, and to compare it against an ``oracle", i.e. a solution that computes the routing tree by exploiting a breadth first visit and the full knowledge about the structure and topology of the system. The values that are used for evaluation and comparison that are returned by the simulator are:
\begin{itemize}
    \item site where the unicast routing process starts;
    \item site where the unicast routing end / site where the AoI-cast starts;
    \item number of nodes in the whole network;
    \item nodes in the AoI;
    \item number of hops for the unicast routing using the ``oracle";
    \item average length of the AoI-cast routes using the ``oracle";
    \item average length of the AoI-cast using MABRAVO$_R$;
    \item unicast route computed using MABRAVO$_D$.
\end{itemize}
These are the values used in the next section for the overall evaluation of MABRAVO.

When executing the MABRAVO$_D$ algorithm, the simulator verifies that the routes goes from the source to the destination, and that no site outside the AoI is reached by the routing process. When executing the MABRAVO$_R$ algorithm the simulator verifies that all the sites in the AoI receive the message once, and that no site outside the Aoi receives the message.

\subsection{Results}

In the following, we present the results obtained by using the simulator described in the previous section. The experiments focus on the more relevant characteristics of the MABRAVO suite, and aimed to:

\begin{itemize}
    \item verify that the MABRAVO$_R$ algorithm is always able to deliver a packet from a source site in the AoI to a destination site in the AoI, without using relays outside the AoI;
    \item proof that the MABRAVO$_R$ algorithm \x sends \y a packet to all the sites in the AoI and no one else, and that the sites receive the packet only once;
    \item compare the length of both MABRAVO$_D$ and MABRAVO$_R$ routes against the ``oracle", as defined in the previous subsection.
\end{itemize}

In order to present an example of the execution of MABRAVO,
Figure~\ref{fig:simroutes} shows the output of the simulator when executed in {\em graphical} mode, for networks comprising $100$ and $1000$ sites, respectively. In both these cases,
the AoI is defined by $10$ points. The meanings of the colors in the figure are the following: the red lines are the borders of the AoI; the Voronoi cells of sites in the AoI have green borders, while sites outside the AoI have Voronoi cells with blue borders; the magenta line connects source and destination of the unicast routing process (the $100$ sites case shows clearly that source and destination points do not have to be co-located with a site), and the sites touched by the routing process are highlighted with cyan circles.

In order to present a meaningful evaluation, the results we show are the average of a series of repeated executions of the system. Specifically,
the system was run in {\em batch} mode to simulate $100$ different networks and perform $100$ MABRAVO$_D$ and MABRAVO$_R$ routing processes on them.
The experiments were executed on networks of $100$ and $1000$ sites, respectively. \x In both the scenarios, the AoIs are defined by $10$ points. \y

\begin{figure*}[!t]
\centering
\centering
\includegraphics[width=2.75in]{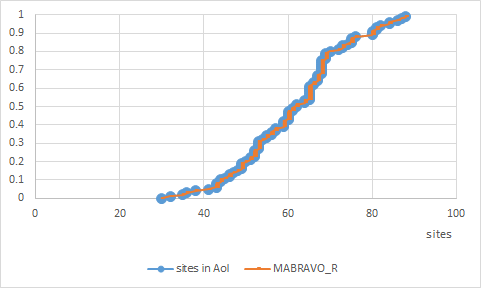}
\hfil
\centering
\includegraphics[width=2.75in]{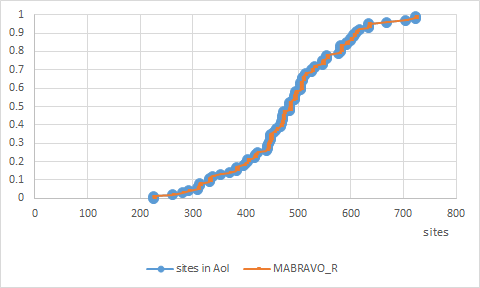}
\caption{CDF of the number of sites in the AoI, and of the number of sites receiving an AoI-cast with MABRAVO$_R$, in networks comprising 100 (left) and 1000 (right) sites.}
\label{fig:simaoisites}
\end{figure*}

\x
For the sake of clarity, in the following we will show experimental results by means of their Cumulative Distribution Function (CDF)~\cite{montgomery2010applied}, meaning that the graphs will show the possible values of the variable under study on the $x$ axis, and the probability that the output of an experiment is less or equal to the value on the $y$ axis.
\y

The simulator confirmed that, in all the scenarios, the MABRAVO suite sent messages exclusively to sites that are included in the AoI.
\x With regard to the AoI-cast algorithm, we counted the number of sites in the AoI from a global vision, and we compared that to the number of sites that receive the AoI-cast message with MABRAVO$_R$ algorithm. We present the results of the experiments in Figure~\ref{fig:simaoisites}, whose overlapping curves ensure that MABRAVO$_R$ delivers the AoI-cast message exactly  to the sites comprised (fully or partially) within the AoI. \y

\begin{figure*}[!t]
\centering
\centering
\includegraphics[width=2.75in]{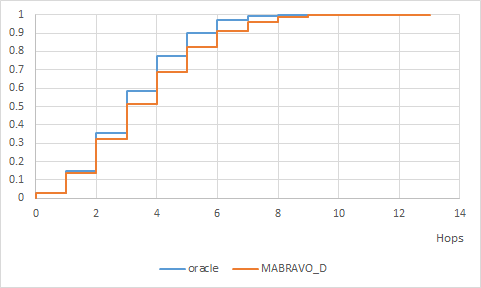}
\hfil
\centering
\includegraphics[width=2.75in]{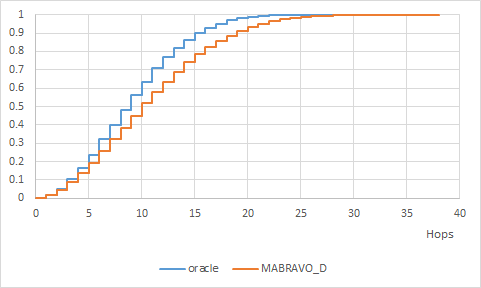}
\caption{CDF of the number of hops in the unicast routing in networks comprising 100 (left) and 1000 (right) sites.}
\label{fig:simucast}
\end{figure*}

\begin{figure*}[!t]
\centering
\centering
\includegraphics[width=2.75in]{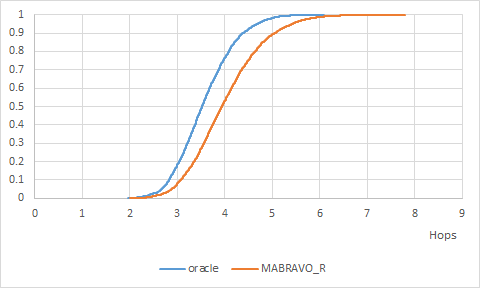}
\hfil
\centering
\includegraphics[width=2.75in]{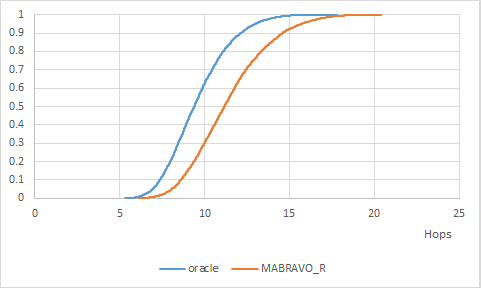}
\caption{CDF of the average number of hops in the AoI-cast in networks comprising 100 (left) and 1000 (right) sites.}
\label{fig:simmcast}
\end{figure*}

We first show the results related to the unicast protocol of MABRAVO, i.e. MABRAVO$_D$. As we anticipated,
when performing unicast communication, using the MABRAVO$_D$ algorithm 
the routing process was always able to route the packets using only sites in the AoI as relays. Figure~\ref{fig:simucast} shows 
the CDF of the route length when MABRAVO$_D$ is employed, presenting the results for
networks of $100$ and $1000$ sites. The results are compared with the ``oracle", showing two remarkably close and similar behaviors. It is worth noticing that the ``oracle" can exploit full knowledge of the geometry of the system, while MABRAVO$_D$ can only rely on partial and local information, and it is the result of autonomous decisions of independent components of the system.

The same considerations are valid for the AoI-cast protocol MABRAVO$_R$. Also in this case, the AoI-cast performed by the MABRAVO$_R$ algorithm is able to reach each site in the AoI with a packet by crossing only sites comprised in the AoI.
Figure~\ref{fig:simmcast} compares the length of the routes for AoI-cast routing by showing the CDF of the average length of the routes, when the two algorithms are used on networks of $100$ and $1000$ sites.

\x
\section{Related Works}\label{sec:related}
Geometric routing techniques are among the most commonly used strategy for forwarding messages and information in Voronoi networks. Compass Routing is the most important solution in this class of routing protocols. 
The use of Compass Routing in Voronoi networks has been first proposed in \cite{Kranakis}, which considers a connected graph and 
assumes that a message is generated at one of its nodes $n$ with the goal to reach a destination node $d$. \cite{Kranakis}
shows that the best strategy is to look at the edges incident in $n$ and
choose the edge whose slope is minimal with respect to the segment
connecting $n$ and the destination $d$. \cite{Kranakis} also shows that 
while Compass Routing is not cycle free for general graphs, it can always find a finite path between two
nodes of a Delaunay Triangulation. The work in~\cite{liebeherr2002application} suggests to exploit Compass 
Routing to define a Spanning Tree supporting an application level multicast.

This class of solutions is very relevant in this field, since it forms the basis of many other routing protocols on Voronoi networks. However, they do not face the problems related to multicast routing within delimited AoIs. These issues are faced by the approaches described in the following of this section.

As far as its applications are concerned, Voronoi networks have recently been exploited in several contexts, but  mostly for the definition of routing algorithms in sensor and wireless sensor networks.

An approach which shares some features with our proposal is introduced in \cite{grey2018towards}. Indeed,  Overlay Geocast, besides forwarding messages toward a given destination, is also able to route messages to all the nodes belonging to a given area $A$. To reach a node in $A$, it exploits greedy routing, then each node in $A$ forwards the message to all its neighbors in $A$ and discards
duplicates by utilizing a Bloom Filters. This implies a large number of unnecessary messages, while MABRAVO totally avoids unnecessary communications.

\cite{xing2004greedy} introduces {\em sensing-covered networks}, which are networks where every
point in a geographic area must be within the
sensing range of at least one sensor. The paper introduces a new routing algorithm, {\em Bounded Voronoi Greedy
Forwarding (BVGF)}, that combines Greedy Forwarding and  and
Voronoi diagrams. When a node forwards a packet,  it considers its eligible neighbors, where a neighbor is eligible if the line segment joining
the source and the destination intersects the Voronoi region of the neighbor or coincides
with one of the boundaries of the Voronoi region. BVGF chooses the neighbor that has the shortest Euclidean distance to the destination among all eligible neighbors.

\cite{dai2016novel} proposes a Voronoi diagram based on semi-distributed algorithms for coverage holes detection in WSNs. The Voronoi diagram is built by considering the location of the sensor nodes which have the task of monitoring and collecting information on the Region of Interest (ROI). Furthermore, the proposed algorithms decide if there are holes in the ROI.

\cite{wan2019multi} investigates the use of wireless sensor networks in IoT environments, to
monitor and collect data in some geographic
area. In this case, spatial range queries with location constraints are employed. To reduce the communication cost and the storage requirements, the work presents an energy- and time-efficient multidimensional
data indexing scheme which exploits a Voronoi tessellation.

\y

\section{Conclusions \& Future Works}\label{sec:conc}
This paper presents an algorithm to perform AoI-cast in
 Voronoi-based distributed networks. The proposed solution is
 able to construct AoI-cast trees in a completely
 distributed manner, where each agent that supervises a Voronoi cell knows only its own coordinates and the ones of its immediate neighbors. Working in totally decentralized manner,  the proposed algorithms are able to deliver packets by reaching all (and only) the sites in a convex AoI, thus requiring a minimal number of messages.
In this work we gave a formal specification of MABRAVO, as well as a formal demonstration of its properties.

\x An open issue to investigate is whether it exists an algorithm that is able to minimize the route lengths depth, while preserving the optimal properties of the algorithms presented in this work (only local knowledge required, minimum number of packets). \y
More future work has been planned:

\x
\begin{itemize}
\item Port the algorithm to another language (currently it is pure Java), which comprises studying how to use the existing libraries of target language. For example, this will allow to access high-performance libraries for the computation of the Voronoi diagrams,
and possibly to use hardware accelerations;
\item Implement the algorithm into a mainstream system simulator, such as ns-3, to experiment the algorithm against physical properties of the wireless links, loss of packets, presence of metal walls and other obstacles that hinder communication;
\item Study the effect of mobility and high churn of units, and the behavior of the algorithms against obsolete information regarding a unit's neighbors;
\item Implement a testbed, where the algorithms are used to enable communication between robots in industrial settings, for example basing the exchange of messages between neighbors over the Arrowhead framework~\cite{delsing2017arrowhead}.
\end{itemize}
\y

\bibliographystyle{ieeetr}

\begin{thebibliography}{10}

\bibitem{iopComcom}
M.~Conti and A.~Passarella, ``The internet of people: A human and data-centric
  paradigm for the next generation internet,'' {\em Computer Communications},
  vol.~131, pp.~51--65, 2018.

\bibitem{sayed18}
H.~El-Sayed, S.~Sankar, M.~Prasad, D.~Puthal, A.~Gupta, M.~Mohanty, and C.-T.
  Lin, ``Edge of things: The big picture on the integration of edge, iot and
  the cloud in a distributed computing environment,'' {\em IEEE Access},
  vol.~6, pp.~1706--1717, 2017.

\bibitem{sot16}
{\"O}.~U. Akg{\"u}l and B.~Canberk, ``Self-organized things (sot): An energy
  efficient next generation network management,'' {\em Computer
  Communications}, vol.~74, pp.~52--62, 2016.

\bibitem{wang17}
S.~Wang, X.~Zhang, Y.~Zhang, L.~Wang, J.~Yang, and W.~Wang, ``A survey on
  mobile edge networks: Convergence of computing, caching and communications,''
  {\em IEEE Access}, vol.~5, pp.~6757--6779, 2017.

\bibitem{iop2020human}
M.~Mordacchini, M.~Conti, A.~Passarella, and R.~Bruno, ``Human-centric data
  dissemination in the iop: Large-scale modeling and evaluation,'' {\em ACM
  Transactions on Autonomous and Adaptive Systems (TAAS)}, vol.~14, no.~3,
  pp.~1--25, 2020.

\bibitem{delsing2017arrowhead}
J.~Delsing, P.~Varga, L.~Ferreira, M.~Albano, P.~P. Pereira, J.~Eliasson, and
  H.~Derhamy, ``The arrowhead framework architecture,'' {\em IoT Automation:
  Arrowhead Framework}, pp.~45--91, 2017.

\bibitem{bellavista18}
P.~Bellavista, S.~Chessa, L.~Foschini, L.~Gioia, and M.~Girolami,
  ``Human-enabled edge computing: Exploiting the crowd as a dynamic extension
  of mobile edge computing,'' {\em IEEE Communications Magazine}, vol.~56,
  no.~1, pp.~145--155, 2018.

\bibitem{mordacchini2015crowdsourcing}
M.~Mordacchini, A.~Passarella, M.~Conti, S.~M. Allen, M.~J. Chorley, G.~B.
  Colombo, V.~Tanasescu, and R.~M. Whitaker, ``Crowdsourcing through cognitive
  opportunistic networks,'' {\em ACM Transactions on Autonomous and Adaptive
  Systems (TAAS)}, vol.~10, no.~2, pp.~1--29, 2015.

\bibitem{8890658}
Q.~{Zheng}, J.~{Jin}, T.~{Zhang}, J.~{Li}, L.~{Gao}, and Y.~{Xiang},
  ``Energy-sustainable fog system for mobile web services in
  infrastructure-less environments,'' {\em IEEE Access}, vol.~7,
  pp.~161318--161328, 2019.

\bibitem{STRAYER20181}
T.~Strayer, S.~Nelson, A.~Caro, J.~Khoury, B.~Tedesco, O.~DeRosa, C.~Clark,
  K.~Sadeghi, M.~Matthews, J.~Kurzer, P.~Lundrigan, V.~Kawadia, D.~Ryder,
  K.~Gremban, and W.~Phoel, ``Content sharing with mobility in an
  infrastructure-less environment,'' {\em Computer Networks}, vol.~144, pp.~1
  -- 16, 2018.

\bibitem{Alsharoa18}
A.~Alsharoa and M.~Yuksel, ``Uav-direct: Facilitating d2d communications for
  dynamic and infrastructure-less networking,'' in {\em Proceedings of the 4th
  ACM Workshop on Micro Aerial Vehicle Networks, Systems, and Applications},
  DroNet'18, p.~57–62, Association for Computing Machinery, 2018.

\bibitem{8379320}
L.~{Chancay-Garc\'{i}a}, E.~{Hern\'{a}ndez-Orallo}, P.~{Manzoni}, C.~T.
  {Calafate}, and J.~{Cano}, ``Evaluating and enhancing information
  dissemination in urban areas of interest using opportunistic networks,'' {\em
  IEEE Access}, vol.~6, pp.~32514--32531, 2018.

\bibitem{Haosheng18}
H.~Huang, G.~Gartner, J.~M. Krisp, M.~Raubal, and N.~V. de~Weghe, ``Location
  based services: ongoing evolution and research agenda,'' {\em Journal of
  Location Based Services}, vol.~12, no.~2, pp.~63--93, 2018.

\bibitem{Liu2014}
E.~S. Liu and G.~K. Theodoropoulos, ``Interest management for distributed
  virtual environments: A survey,'' {\em ACM computing surveys (CSUR)},
  vol.~46, no.~4, p.~51, 2014.

\bibitem{vrsense}
B.~Jung, S.~Lim, J.~Chae, and C.~Pu, ``Vrsense: Validity region sensitive query
  processing strategies for static and mobile point-of-interests in manets,''
  {\em Computer Communications}, vol.~116, pp.~132--146, 2018.

\bibitem{mqry}
B.~Jung, S.~Lim, J.~Chae, C.~Pu, and M.~Min, ``Mqry: Elastic validity region
  for querying mobile point-of-interests in infrastructure-less networks,''
  {\em IEEE Systems Journal}, 2019.

\bibitem{Qi18}
J.~Qi, R.~Zhang, C.~S. Jensen, K.~Ramamohanarao, and J.~HE, ``Continuous
  spatial query processing: A survey of safe region based techniques,'' {\em
  ACM Comput. Surv.}, vol.~51, May 2018.

\bibitem{ghaffari2010necessity}
M.~Ghaffari, B.~Hariri, and S.~Shirmohammadi, ``On the necessity of using
  delaunay triangulation substrate in greedy routing based networks,'' {\em
  IEEE Communications Letters}, vol.~14, no.~3, pp.~266--268, 2010.

\bibitem{wan19}
S.~Wan, Y.~Zhao, T.~Wang, Z.~Gu, Q.~H. Abbasi, and K.-K.~R. Choo,
  ``Multi-dimensional data indexing and range query processing via voronoi
  diagram for internet of things,'' {\em Future Generation Computer Systems},
  vol.~91, pp.~382--391, 2019.

\bibitem{pietra19}
A.~Pietrabissa and F.~Liberati, ``Dynamic distributed clustering in wireless
  sensor networks via voronoi tessellation control,'' {\em International
  Journal of Control}, vol.~92, no.~5, pp.~1001--1014, 2019.

\bibitem{CRPD18}
S.~Wang, J.~Yu, M.~Atiquzzaman, H.~Chen, and L.~Ni, ``Crpd: a novel clustering
  routing protocol for dynamic wireless sensor networks,'' {\em Personal and
  Ubiquitous Computing}, vol.~22, no.~3, pp.~545--559, 2018.

\bibitem{underwater17}
E.~P. C{\^a}mara~J{\'u}nior, L.~F. Vieira, and M.~A. Vieira, ``Scheduling nodes
  in underwater networks using voronoi diagram,'' in {\em Proceedings of the
  20th ACM International Conference on Modelling, Analysis and Simulation of
  Wireless and Mobile Systems}, pp.~245--252, ACM, 2017.

\bibitem{underwater18}
S.~Wang, T.~L. Nguyen, and Y.~Shin, ``Data collection strategy for magnetic
  induction based monitoring in underwater sensor networks,'' {\em IEEE
  Access}, vol.~6, pp.~43644--43653, 2018.

\bibitem{taxi16}
D.~Zhang, J.~Wan, Z.~He, S.~Zhao, K.~Fan, S.~O. Park, and Z.~Jiang,
  ``Identifying region-wide functions using urban taxicab trajectories,'' {\em
  ACM Transactions on Embedded Computing Systems (TECS)}, vol.~15, no.~2,
  p.~36, 2016.

\bibitem{v2v18}
G.~Li, B.~He, and A.~Du, ``A traffic congestion aware vehicle-to-vehicle
  communication framework based on voronoi diagram and information
  granularity,'' {\em Peer-to-Peer Networking and Applications}, vol.~11,
  no.~1, pp.~124--138, 2018.

\bibitem{ghaffari2014dynamic}
M.~Ghaffari, B.~Hariri, S.~Shirmohammadi, and D.~T. Ahmed, ``A dynamic
  networking substrate for distributed mmogs,'' {\em IEEE Transactions on
  Emerging Topics in Computing}, vol.~3, no.~2, pp.~289--302, 2014.

\bibitem{mordacchini2010hivory}
M.~Mordacchini, L.~Ricci, L.~Ferrucci, M.~Albano, and R.~Baraglia, ``Hivory:
  Range queries on hierarchical voronoi overlays,'' in {\em 2010 IEEE Tenth
  International Conference on Peer-to-Peer Computing (P2P)}, pp.~1--10, IEEE,
  2010.

\bibitem{abdellatif2019edge}
A.~A. Abdellatif, A.~Mohamed, C.~F. Chiasserini, M.~Tlili, and A.~Erbad, ``Edge
  computing for smart health: Context-aware approaches, opportunities, and
  challenges,'' {\em IEEE Network}, vol.~33, no.~3, pp.~196--203, 2019.

\bibitem{an2018context}
C.~An, C.~Wu, T.~Yoshinaga, X.~Chen, and Y.~Ji, ``A context-aware edge-based
  vanet communication scheme for its,'' {\em sensors}, vol.~18, no.~7, p.~2022,
  2018.

\bibitem{liao2019learning}
H.~Liao, Z.~Zhou, X.~Zhao, L.~Zhang, S.~Mumtaz, A.~Jolfaei, S.~H. Ahmed, and
  A.~K. Bashir, ``Learning-based context-aware resource allocation for
  edge-computing-empowered industrial iot,'' {\em IEEE Internet of Things
  Journal}, vol.~7, no.~5, pp.~4260--4277, 2019.

\bibitem{VON}
S.-Y. Hu, J.-F. Chen, and T.-H. Chen, ``Von: a scalable peer-to-peer network
  for virtual environments,'' {\em IEEE Network}, vol.~20, no.~4, pp.~22--31,
  2006.

\bibitem{Voraque}
M.~Albano, L.~Ricci, M.~Baldanzi, and R.~Baraglia, ``Voraque: Range queries on
  voronoi overlays,'' in {\em 2008 IEEE Symposium on Computers and
  Communications}, pp.~495--500, IEEE, 2008.

\bibitem{Buyukkaya2008}
E.~Buyukkaya and M.~Abdallah, ``Data management in voronoi-based p2p gaming,''
  in {\em 2008 5th IEEE Consumer Communications and Networking Conference},
  pp.~1050--1053, IEEE, 2008.

\bibitem{Alsalih08}
W.~Alsalih, K.~Islam, Y.~N\'{u}\~{n}ez Rodr\'{\i}guez, and H.~Xiao,
  ``Distributed voronoi diagram computation in wireless sensor networks,'' in
  {\em Proceedings of the Twentieth Annual Symposium on Parallelism in
  Algorithms and Architectures}, SPAA '08, p.~364, Association for Computing
  Machinery, 2008.

\bibitem{pietrabissa2016distributed}
A.~Pietrabissa, F.~Liberati, and G.~Oddi, ``A distributed algorithm for ad-hoc
  network partitioning based on voronoi tessellation,'' {\em Ad Hoc Networks},
  vol.~46, pp.~37--47, 2016.

\bibitem{beaumont2007voronet}
O.~Beaumont, A.-M. Kermarrec, L.~Marchal, and {\'E}.~Rivi{\`e}re, ``Voronet: A
  scalable object network based on voronoi tessellations,'' in {\em 2007 IEEE
  International Parallel and Distributed Processing Symposium}, pp.~1--10,
  IEEE, 2007.

\bibitem{Beaumont07}
O.~Beaumont, A.-M. Kermarrec, L.~Marchal, and {\'E}.~Rivi{\`e}re, ``Voronet: A
  scalable object network based on voronoi tessellations,'' in {\em 2007 IEEE
  International Parallel and Distributed Processing Symposium}, pp.~1--10,
  IEEE, 2007.

\bibitem{albanoCSE}
M.~Albano, R.~Baraglia, M.~Mordacchini, and L.~Ricci, ``Efficient broadcast on
  area of interest in voronoi overlays,'' in {\em 2009 International Conference
  on Computational Science and Engineering}, vol.~1, pp.~224--231, IEEE, 2009.

\bibitem{Aurenhammer}
F.~Aurenhammer, ``Voronoi diagrams -- a survey of a fundamental geometric data
  structure,'' {\em ACM Computing Surveys (CSUR)}, vol.~23, no.~3,
  pp.~345--405, 1991.

\bibitem{golodoniuc2017distributed}
P.~Golodoniuc, N.~J. Car, and J.~Klump, ``Distributed persistent identifiers
  system design,'' {\em Data Science Journal}, vol.~16, 2017.

\bibitem{mahalle2020rethinking}
P.~N. Mahalle, G.~Shinde, and P.~M. Shafi, ``Rethinking decentralised
  identifiers and verifiable credentials for the internet of things,'' in {\em
  Internet of Things, Smart Computing and Technology: A Roadmap Ahead},
  pp.~361--374, Springer, 2020.

\bibitem{kortesniemi2019improving}
Y.~Kortesniemi, D.~Lagutin, T.~Elo, and N.~Fotiou, ``Improving the privacy of
  iot with decentralised identifiers (dids),'' {\em Journal of Computer
  Networks and Communications}, vol.~2019, 2019.

\bibitem{perkins2003ad}
C.~Perkins, E.~M. Royer, and S.~Das, ``Ad-hoc on demand distance vector routing
  (aodv),'' tech. rep., Internet-Draft, November 1997.
  draft-ietf-manet-aodv-00. txt, 2003.

\bibitem{jarvis73}
R.~A. Jarvis, ``On the identification of the convex hull of a finite set of
  points in the plane,'' {\em Information processing letters}, vol.~2, no.~1,
  pp.~18--21, 1973.

\bibitem{Kallay84}
M.~Kallay, ``The complexity of incremental convex hull algorithms in rd,'' {\em
  Information Processing Letters}, vol.~19, no.~4, p.~197, 1984.

\bibitem{ko2000location}
Y.-B. Ko and N.~H. Vaidya, ``Location-aided routing (lar) in mobile ad hoc
  networks,'' {\em Wireless networks}, vol.~6, no.~4, pp.~307--321, 2000.

\bibitem{6714420}
R.~{Jiang}, Y.~{Zhu}, T.~{He}, Y.~{Liu}, and L.~M. {Ni}, ``Exploiting
  trajectory-based coverage for geocast in vehicular networks,'' {\em IEEE
  Transactions on Parallel and Distributed Systems}, vol.~25, no.~12,
  pp.~3177--3189, 2014.

\bibitem{seada2006efficient}
K.~Seada and A.~Helmy, ``Efficient and robust geocasting protocols for sensor
  networks,'' {\em Computer Communications}, vol.~29, no.~2, pp.~151--161,
  2006.

\bibitem{montgomery2010applied}
D.~C. Montgomery and G.~C. Runger, {\em Applied statistics and probability for
  engineers}.
\newblock John Wiley \& Sons, 2010.

\bibitem{Kranakis}
E.Kranakis, H.Singh, and J.Urrutia, ``Compass routing on geometric networks,''
  in {\em In Proceedings of 11th Can. Conf. on Computational Geometry, CCCG},
  August 1999.

\bibitem{liebeherr2002application}
J.~Liebeherr, M.~Nahas, and W.~Si, ``Application-layer multicasting with
  delaunay triangulation overlays,'' {\em IEEE Journal on Selected Areas in
  Communications}, vol.~20, no.~8, pp.~1472--1488, 2002.

\bibitem{grey2018towards}
M.~Grey, M.~Theil, M.~Rossberg, and G.~Schaefer, ``Towards voronoi-based backup
  routing for large-scale distributed applications,'' in {\em 2018 Global
  Information Infrastructure and Networking Symposium (GIIS)}, pp.~1--5, IEEE,
  2018.

\bibitem{xing2004greedy}
G.~Xing, C.~Lu, R.~Pless, and Q.~Huang, ``On greedy geographic routing
  algorithms in sensing-covered networks,'' in {\em Proceedings of the 5th ACM
  international symposium on Mobile ad hoc networking and computing},
  pp.~31--42, 2004.

\bibitem{dai2016novel}
G.~Dai, H.~Lv, L.~Chen, B.~Zhou, and P.~Xu, ``A novel coverage holes discovery
  algorithm based on voronoi diagram in wireless sensor networks,'' {\em Int.
  J. Hybrid Inf. Technol}, vol.~9, no.~3, pp.~273--282, 2016.

\bibitem{wan2019multi}
S.~Wan, Y.~Zhao, T.~Wang, Z.~Gu, Q.~H. Abbasi, and K.-K.~R. Choo,
  ``Multi-dimensional data indexing and range query processing via voronoi
  diagram for internet of things,'' {\em Future Generation Computer Systems},
  vol.~91, pp.~382--391, 2019.

\end{thebibliography}

\begin{IEEEbiography}[{\includegraphics[width=1in,height=1.25in,clip,keepaspectratio]{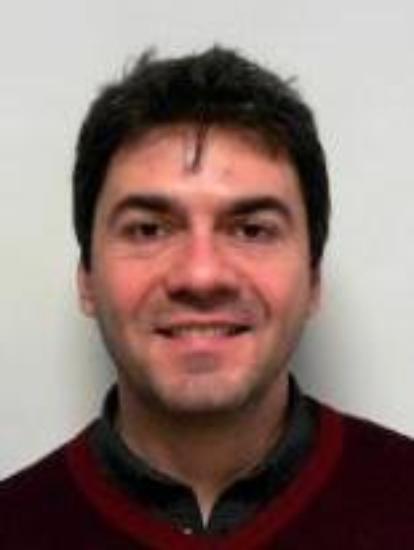}}]{Michele Albano} (M'11-SM'19)
is a tenure-track Assistant Professor for the Department of Computer Science of Aalborg University, Denmark. He got his Ph.D. in Computer Science from the University of Pisa, Italy, and his research is focused on distributed systems and embedded systems.
He has been active in more than 10 European research projects, 
and he acted as technical manager for CELTIC project Green-T; work package leader for FP7 IP ROMEO, ITEA2 CarCoDe, and ECSEL MANTIS.
Michele is editor of the Open Access book ''The MANTIS book: Cyber Physical System Based Proactive Maintenance'', and he is Editor in Chief for the Journal of Industrial Engineering and Management Science, River Publishers.
\end{IEEEbiography}

\begin{IEEEbiography}[{\includegraphics[width=1in,height=1.25in,clip,keepaspectratio]{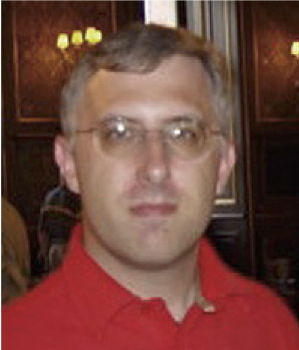}}]{Matteo Mordacchini} is a Researcher at  the Ubiquitous Internet Lab, IIT-CNR, Italy. Currently, his main research areas include Edge computing, the Internet of People paradigm, and adaptive, self-organizing distributed solutions. 
Specifically, he is investigating how models of human cognitive processes, coming from the cognitive psychology domain, can be exploited to devise autonomic and adaptive solutions for autonomous agents. Other research directions include Cloud computing and Opportunistic Networks.
Matteo has also worked in several EU projects and has served in the TPC of many international conferences and workshops.
\end{IEEEbiography}

\begin{IEEEbiography}[{\includegraphics[width=1in,height=1.25in,clip,keepaspectratio]{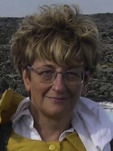}}]{Laura Ricci} received the M.Sc. degree in computer science and the Ph.D. degree from the University of Pisa, Pisa, Italy, in 1983 and 1990, respectively. She is currently an Associate Professor with the Department of Computer Science, University of Pisa. She has been involved in several research projects. She is also the Local Coordinator of the H2020 European Project Helios: A Context-aware Distributed Networking Framework. Her current research interests include distributed systems, peer-to-peer networks, cryptocurrencies and blockchains, and social network analysis. She has coauthored over 100 articles published in international journals and conference/workshop proceedings in these fields. Dr. Ricci has served as a program committee member and the chair for several conferences. She has been the Program Chair of the 19th edition of International Conference on Distributed Applications and Interoperable Systems (DAIS). She is an Organizer of the Large Scale Distributed Virtual Environments (LSDVE) Workshop held in conjunction with EUROPAR conference.
\end{IEEEbiography}

\EOD

\end{document}